\newcommand{\cindy}[1]{\textcolor{green!60!black}{(Cindy: #1)}}
\newcommand{\arash}[1]{\textcolor{blue}{(Arash: #1)}}
\newcommand{\bob}[1]{\textcolor{red}{(Bob: #1)}}
\renewcommand{\cindy}[1]{}
\renewcommand{\arash}[1]{}
\renewcommand{\bob}[1]{}
\newcommand\floor[1]{\lfloor#1\rfloor}
\newcommand\ceil[1]{\lceil#1\rceil}
\newcommand{\cov}{\mathrm{cov}}
\declaretheorem[name=Theorem]{thm}
\declaretheoremstyle[style=claim,qed=$\Diamond$]{claim}
\declaretheoremstyle[style=plain,qed=$\square$]{theorem}
\theoremstyle{plain}
\newtheorem{lemma}[thm]{Lemma}
\newtheorem{proposition}[thm]{Proposition}
\newtheorem{definition}[thm]{Definition}
\newtheorem{claim}{Claim}
\DeclareMathOperator{\spp}{supp}
\DeclareMathOperator{\join}{-\;JOIN}
\DeclareMathOperator{\lp}{LP}
\DeclareMathOperator{\st}{ST}
\DeclareMathOperator{\LP}{LP}
\DeclareMathOperator{\IP}{IP}
\DeclareMathOperator{\dom}{\mathcal{D}}
\DeclareMathOperator{\conv}{conv}
\DeclareMathOperator{\subtour}{{Subtour}}
\DeclareMathOperator{\tsp}{{{TSP}}}
\DeclareMathOperator{\2ec}{{{2EC}}}
\DeclareMathOperator{\LPC}{{{LPC}}}
\DeclareMathOperator{\prun}{{{Pruning}}}
\DeclareMathOperator{\HelperLP}{{{FixNextVarLP}}}
\DeclareMathOperator{\tap}{{{TAP}}}
\DeclareMathOperator{\cut}{{{CUT}}}
\DeclareMathOperator{\permat}{{{PM}}}
\newenvironment{cproof}
{\begin{proof}
 [Proof.]
 \vspace{-1.5\parsep}
}
{ \end{proof}}
\title{Fractional Decomposition Tree Algorithm: A tool for studying the integrality gap of Integer Programs}
\author{\textsc{Robert Carr}\thanks{University of New Mexico  {\tt{bobcarr@unm.edu}}. This material is based upon research supported in part by
the U.S. Office of Naval Research under award number N00014-18-1-2099.}
\and \textsc{Arash Haddadan}\thanks{University of Virginia  {\tt{ahaddada@virginia.edu}}.  This work was mainly done when this author was a graduate student at Carnegie Mellon University.}
\and \textsc{Cynthia A. Phillips}\thanks{Sandia National Laboratories  {\tt{caphill@sandia.gov}}. Sandia National Laboratories is a multi-mission laboratory managed and operated by National Technology and Engineering Solutions of Sandia, LLC., a wholly owned subsidiary of Honeywell International, Inc., for the U.S. Department of Energy’s National Nuclear Security Administration under contract DE-NA0003525.}}
\begin{document}

\maketitle

\begin{abstract}
We present a new algorithm, Fractional Decomposition Tree (FDT) for finding a feasible solution for an integer program (IP) where all variables are binary. FDT runs in polynomial time and is guaranteed to find a feasible integer solution provided the integrality gap is bounded. The algorithm gives a construction for Carr and Vempala's theorem that any feasible solution to the IP's linear-programming relaxation, when scaled by the instance integrality gap, dominates a convex combination of feasible solutions. FDT is also a tool for studying the integrality gap of IP formulations.  We demonstrate that with experiments studying the integrality gap of two problems: optimally augmenting a tree to a 2-edge-connected graph and finding a minimum-cost 2-edge-connected multi-subgraph (2EC). We also give a simplified algorithm, Dom2IP, that more quickly determines if an instance has an unbounded integrality gap. We show that FDT's speed and approximation quality compare well to that of feasibility pump on moderate-sized instances of the vertex cover problem. For a particular set of hard-to-decompose fractional 2EC solutions, FDT always gave a better integer solution than the best previous approximation algorithm (Christofides).
\end{abstract}

\clearpage
\nocite{IPbook}

\section{Introduction}
In this paper we focus on finding feasible solutions to binary Integer Linear Programs (IP). Informally, an integer program is the optimization of a linear objective function subject to linear constraints, where the variables must take integer values. Binary variables represent yes/no decisions. Integer Programming (and more generally Mixed Integer Linear Programming (MILP)) can model many practical optimization problems including scheduling, logistics and resource allocation.

It is  NP-hard even to determine if an IP instance has a feasible solution~\cite{GareyJohnson}. Given its practical importance, however, there are many commercial (e.g. CPLEX, GUROBI, XPRESS) and free (e.g. CBC) solvers that for specific IP instances can often find solutions that are optimal within a given tolerance. Still we have formulated moderate-sized IP instances that no commercial solver can currently solve. Thus there is value in heuristics to find feasible solutions for general IP instances (see e.g.~\cite{HanafiT2017}).
These heuristics, such as the popular Feasibility pump algorithm \cite{fp1,fp2}, are often effective and fast in practice. However, the heuristics can sometimes fail to find a feasible solution. Moreover, these heuristics do not provide any bounds on the quality of the solution they find. 

A major tool for finding feasible solutions for discrete optimization problems expressed as IPs is the {\em linear-programming (LP) relaxation} for the IP formulation. This is a new problem created by relaxing the integrality constraints for an IP instance, allowing the variables to take continuous (rational) values. Linear programs can be solved in polynomial time. The objective value of the linear-programming relaxation provides a bound (lower bound for a minimization problem and upper bound for a maximization problem) on the optimal solution to the IP instance. The solutions can also provide some useful global structure, even though the fractional values might not be directly meaningful. 

{\em LP-based approximation algorithms} use LP relaxations to find provably good approximate feasible solutions to IP problems in polynomial time. At the highest level, they involve solving the LP relaxation, using special structure from the problem to find a feasible solution, and proving that the objective value of the solution is no more than $C$ times worse than the bound from the LP relaxation. The approximation factor $C$ can be a constant or depend on the input parameters of the IP, e.g. $O(\log(n))$ where $n$ is the number of variables in the formulation of the IP (the dimension of the problem).

There is an inherent limit to how small $C$ can be for a given IP. The integrality gap for an IP instance is the ratio of the best integer solution to the best solution of the LP relaxation. Any LP-based approximation cannot have an approximation factor $C$ smaller than the integrality gap because there is no feasible solution with an objective value better than a factor of $C$ worse than the optimal solution of the LP relaxation. 

If the integrality gap for an IP formulation is large, it is sometimes possible to add families of constraints to the formulation to reduce the integrality gap.  These constraints are redundant for the integer problem, but can make some fractional solutions no longer feasible for the LP. These families of constraints (cuts) can have exponential size (number of constraints) as long as we can provide a polynomial-time separation algorithm. A separation algorithm for a family of constraints takes an optimal solution to an LP instance that explicitly enforces only a (potentially empty) subset of the family. It either confirms that all constraints are satisfied or returns a most violated constraint. Thus one can add this constraint and repeat at most a polynomial number of times until all are satisfied.

Reducing the integrality gap of an IP formulation has two advantages.  It can lead to better LP-based approximation algorithm bounds as described above.  It can also help exact solvers run faster or solve instances it could not before. Exact IP solvers are based on intelligent branch-and-bound strategies.  As mentioned above, commercial and open-source MILP solvers can find exact solutions (or near-optimal solutions with a provable bound) to many specific instances of NP-hard combinatorial optimization problems. These solvers use the LP relaxation to get lower bounds (for minimization problems).  The search requires exponential time in the worst case. But this search is practically feasible when the solver can prune large amounts of the search space.  This happens when the lower bound for a region of the search space (subproblem) is worse than the value of a known feasible solution.  This requires a way to find a good heuristic solution and it requires good lower bounds that are as close to the actual optimal value of an IP subproblem as possible.

In this paper, we give a method to find feasible solutions for IPs if the integrality gap is bounded. The method is also a tool for evaluating the integrality gap for a formulation.  Researcher can use it to determine whether they should expend effort to find new classes of cuts.  They can also use it to help guide theory for finding tighter bounds on the integrality gap for classic problems like the traveling salesman problem.

We now describe IPs and our methods more formally. The set of feasible points for a pure IP (henceforth IP) is the set
\begin{equation}
S(A,b)= \{x\in \mathbb{Z}^{n}\;:\; Ax\geq b\}  \label{S},
\end{equation}
where matrix $A$ of rationals has $m$ constraints on $n$ variables and $b \in \mathbb{R}^{m}$.
If we drop the integrality constraints, we have the linear relaxation of set $S(A,b)$,
\begin{equation}
P(A,b) = \{x\in \mathbb{R}^{n}\;:\; Ax\geq b\}. \label{P}
\end{equation}
Let $I=(A,b)$ denote an instance. Then $S(I)$ and $P(I)$ denote $S(A,b)$ and $P(A,b)$, respectively. Given a linear objective function $c$, an IP is $\min \;\{cx:\; x \in S(I)\}$. 

Relaxing the integrality constraints gives the polynomial-time-solvable linear-programming relaxation: $\min \;\{cx:\;x\in P(I) \}$.  The optimal value of this linear program (LP), denoted $z_{\lp}(I,c)$, is a lower bound on the optimal value for the IP, denoted $z_{\IP}(I,c)$. 

Many researchers (see \cite{davids,vazirani}) have developed polynomial time LP-based approximation algorithms that find solutions for special classes of IPs whose cost are provably at most $C\cdot z_{LP}(I,c)$ for some (possibly constant) function $C$. If the analysis uses the LP bound to prove the approximation quality, then $C$ is at least the integrality gap.
\begin{definition}
The integrality gap $g(I)$ for instance $I$ is: $$g(I)= \max_{c\geq 0}\frac{z_{IP}(I,c)}{z_{LP}(I,c)},$$
where $z_{IP}(I,c)$ is the optimal solution to the integer program and $z_{LP}(I,c)$ is the solution to the linear-programming relaxation.
\end{definition}

In general the integrality gap is defined similarly for any objective function, but we wish to be explicit about the class of problems we assume in this paper. For example consider the minimum cost \textsc{2-edge-connected multi-subgraph problem (2EC)}: Given a graph $G=(V,E)$ and $c\in \mathbb{R}^E_{\geq 0}$, 2EC asks for the minimum cost 2-edge-connected subgraph of $G$, with multi-edges allowed. A graph is 2-edge-connected if there are at least two edge-disjoint paths between every pair of vertices.  A linear-programming relaxation for this problem, known as the {\em subtour-elimination} relaxation, is
 \begin{equation}\min \{cx: \sum_{e\in \delta(U)}x_e \geq 2 \mbox{ for } \emptyset \subsetneq U \subsetneq V,\; x\in [0,2]^{E}\}, 
 \label{eq:subtour}
 \end{equation}
where $\delta(U)$ for vertex subset $U$ is the set of edges that cross the cut defined by $U$.  That is, each $e \in \delta(U)$ has one endpoint in $U$ and the other endpoint in $V-U$.
 In this case the instance-specific integrality gap is the integrality gap of the subtour-elimination relaxation for the 2EC on a graph with $n$ vertices. Alexander et al. \cite{alexander2006integrality} showed this instance-specific integrality gap
is at most $\frac{7}{6}$ for instances with $n= 10$ .

The value of $g(I)$ depends on the constraints in~(\ref{S}).  We cannot hope to find solutions for the IP with objective values better than $g(I)\cdot z_{LP}(I,c)$. More generally we can define the integrality gap for a class of instances $\mathcal{I}$ as follows.
\begin{equation}\label{gapproblem}
g(\mathcal{I}) = \max_{c\geq 0 , I\in\mathcal{I}}\frac{z_{IP}(I,c)}{z_{LP}(I,c)}.
\end{equation}
For example, the integrality gap of the subtour-elimination relaxation for the 2EC is at most $\frac{3}{2}$ \cite{wolsey} and at least $\frac{6}{5}$ \cite{alexander2006integrality}. Therefore, we cannot hope to obtain an LP-based $(\frac{6}{5}-\epsilon)$-approximation algorithm for this problem using this LP relaxation to bound the quality of a feasible solution.

Our methods apply theory connecting integrality gaps to sets of feasible solutions. Instances $I$ with $g(I)=1$ have $P(I)=\conv(S(I))$, the convex hull of the lattice of feasible points. In this case, $P(I)$ is an \textit{integral} polyhedron. The spanning tree polytope of graph $G$, $\st(G)$, and the perfect-matching polytope of graph $G$, $\permat(G)$, have this property (\cite{Edmonds2003,edmondsPM}). Thus the linear-programming relaxation for minimum-cost spanning tree has basic feasible solutions (vertices) that are integral solutions, i.e. spanning trees. For such problems there is an algorithm to express vector $x\in P(I)$ (a feasible LP solution) as a convex combination of points in $S(I)$ (feasible IP solutions) in polynomial time \cite{cons-cara}.

\begin{proposition}\label{cara}
	If $g(I)=1$, then for $x\in P(I)$ there exists a positive integer $k$ and $\theta \in [0,1]^k$, where $\sum_{i=1}^{k}\theta_i =1$ and $\tilde{x}^i\in S(I)$ for $i\in [k]$ such that $\sum_{i=1}^{k}\theta_i \tilde{x}^i\leq x$. Moreover, we can find such a convex combination in polynomial time.
\end{proposition}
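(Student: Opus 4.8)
The plan is to combine a structural consequence of $g(I)=1$ with a constructive version of Carathéodory's theorem. The structural fact I would use first is that $g(I)=1$ implies $P(I)\subseteq\dom(\conv(S(I)))$. This is exactly the $\alpha=1$ case of Carr and Vempala's theorem, but it also has a one-line proof: if some $x\in P(I)$ were not in $\dom(\conv(S(I)))$, then, because that set is closed, convex, and upward closed, it can be separated from $x$ by an inequality with a \emph{nonnegative} normal, i.e.\ there is $c\ge 0$ with $cx<\min\{cz:z\in\conv(S(I))\}=z_{IP}(I,c)$; but $x\in P(I)$ gives $cx\ge z_{LP}(I,c)$, so $z_{LP}(I,c)<z_{IP}(I,c)$, contradicting $g(I)=1$. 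Hence there is $y\le x$ with $y\in\conv(S(I))$, and Carathéodory's theorem writes $y=\sum_{i=1}^k\theta_i\tilde x^i$ with $\theta\ge 0$, $\sum_i\theta_i=1$, $k\le n+1$, and each $\tilde x^i$ a vertex of $\conv(S(I))$ --- hence an integer point of $\{x:Ax\ge b\}$, i.e.\ $\tilde x^i\in S(I)$. Since $\sum_i\theta_i\tilde x^i=y\le x$, this settles the existence part.

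For the polynomial-time claim I would run the constructive form of Carathéodory's theorem \cite{cons-cara}: given a polynomial-time linear-optimization oracle for a rational polyhedron together with a point of it, one computes in polynomial time a representation of the point as a convex combination of at most $n+1$ vertices plus a nonnegative combination of extreme rays, by repeatedly locating a vertex $v$ of the minimal face containing the current point $w$, sliding from $v$ through $w$ to the far boundary of that face (so $w$ becomes a convex combination of $v$ and a point on a strictly lower-dimensional face), and recursing. The hypothesis $g(I)=1$ is what supplies the oracle: for every $c\ge 0$ one has $\min\{cx:x\in P(I)\}=z_{LP}(I,c)=z_{IP}(I,c)=\min\{cx:x\in\conv(S(I))\}$, so the polynomial-time LP over $P(I)$ computes, and (with appropriate tie-breaking) attains at a point of $S(I)$, the optimum over $\conv(S(I))$ in every nonnegative direction; moreover $z_{LP}(I,c)$ being finite for all $c\ge 0$ forces the recession cone of $P(I)$ into $\mathbb{R}^n_{\ge 0}$. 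In the typical case $P(I)$ is then integral and I would decompose $x$ itself, discarding the nonnegative recession part to get $\sum_i\theta_i\tilde x^i\le x$; when $P(I)\supsetneq\conv(S(I))$ I would first solve the LP $\min\{\mathbf 1^\top z:\ Az\ge b,\ z\le x\}$, whose optimum $y$ satisfies $y\le x$ and must lie in $\conv(S(I))$ (by optimality $y$ cannot strictly dominate any point of $\conv(S(I))$, yet it dominates one since $y\in P(I)\subseteq\dom(\conv(S(I)))$), and then decompose $y$.

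The step I expect to be the main obstacle --- and the place where $g(I)=1$ really does the work --- is supplying and justifying the polynomial optimization oracle for $\conv(S(I))$: optimizing over $\conv(S(I))$ is NP-hard in general, so one must argue that $g(I)=1$ legitimately replaces it by the tractable LP over $P(I)$ for exactly the queries the decomposition procedure makes, that an integral optimizer is actually returned, and that no unbounded or degenerate direction derails the recursion (this is what the reduction to the dominant, and the $\mathbf 1^\top$-minimization above, are meant to handle). Everything past that point is the standard face-chasing argument plus the $n+1$ bound from Carathéodory.
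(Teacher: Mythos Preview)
The paper does not actually prove this proposition; it simply asserts in the sentence preceding it that $g(I)=1$ implies $P(I)=\conv(S(I))$ (hence $P(I)$ is integral) and cites constructive Carath\'eodory \cite{cons-cara} for the polynomial-time decomposition. Your proposal follows exactly this outline, so at the level of detail the paper offers, the approaches coincide.

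Where you go beyond the paper is in observing that $g(I)$ is defined only for $c\ge 0$, so $g(I)=1$ a priori yields only $P(I)\subseteq\dom(\conv(S(I)))$, not $P(I)=\conv(S(I))$. Your existence argument via a nonnegative separating hyperplane is correct and more careful than the paper's bare assertion. For polynomial time, your $\mathbf 1^\top$-minimization to produce $y\le x$ with $y\in\conv(S(I))$ is a clean device and the argument that the optimum must lie in $\conv(S(I))$ is sound. But the step you yourself flag as the main obstacle is not closed: decomposing $y$ by face-chasing over $\conv(S(I))$ requires optimizing in directions that are not nonnegative, and for those you have no oracle --- ``appropriate tie-breaking'' and ``for exactly the queries the decomposition procedure makes'' are placeholders, not arguments. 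The paper sidesteps this by taking $P(I)=\conv(S(I))$ for granted, after which one runs constructive Carath\'eodory directly on $P(I)$ (whose description $Ax\ge b$ is in hand) and every vertex returned is automatically in $S(I)$. So the paper's route is shorter but rests on an unjustified identification; yours is more honest about what $g(I)=1$ buys but leaves a real gap at the oracle step in the case $P(I)\supsetneq\conv(S(I))$.
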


An equivalent way of describing Proposition \ref{cara} is the following Theorem of Carr and Vempala \cite{Carr2004}. The {\em dominant} of $P(I)$, which we denote by $\dom(P(I))$, is the set of points $x'$ such that there exists a point $x\in P$ with $x'\geq x$ in every component. A polyhedron is of \textit{blocking type} if it is equal to its dominant.

\begin{thm}[Carr, Vempala \cite{Carr2004}] \label{CV2}
	We have $g(I) \leq C$ if and only if for  $x\in P(I)$ there exists $\theta \in [0,1]^k$ where $\sum_{i=1}^{k}\theta_i =1$ and $\tilde{x}^i\in \dom(S(I))$ for $i\in [k]$ such that $\sum_{i=1}^{k}\theta_i \tilde{x}^i\leq Cx$.
\end{thm}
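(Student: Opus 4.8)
The plan is to unwind the statement into a polyhedral containment and then prove both directions by the standard LP-duality / separating-hyperplane argument (essentially that of Carr and Vempala). The condition ``there exist $\theta\in[0,1]^k$ with $\sum_i\theta_i=1$ and $\tilde x^i\in\dom(S(I))$ with $\sum_i\theta_i\tilde x^i\le Cx$'' is exactly the statement that $Cx\in\dom(\conv(S(I)))$, so the theorem is equivalent to: $g(I)\le C$ if and only if $C\cdot P(I)\subseteq\dom(\conv(S(I)))$. I work under the paper's standing nonnegativity assumptions, so that $S(I)\subseteq\mathbb{Z}^{n}_{\ge 0}$, the set $\dom(\conv(S(I)))$ is a nonempty closed convex polyhedron of blocking type, and the optima $z_{LP}(I,c),z_{IP}(I,c)$ that appear are attained and nonnegative; the degenerate cases ($S(I)=\emptyset$, unbounded objective, $z_{LP}(I,c)=0$) are handled separately and are not the interesting part.

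For the ``if'' direction I proceed directly. Fix $c\ge 0$ and let $x^*$ attain $z_{LP}(I,c)=cx^*$. By hypothesis there are $\theta$ and $\tilde x^i\in\dom(S(I))$ with $\sum_i\theta_i\tilde x^i\le Cx^*$. Each $\tilde x^i$ dominates some genuine $y^i\in S(I)$, so $c\tilde x^i\ge cy^i\ge z_{IP}(I,c)$ because $c\ge 0$; averaging and then applying $c\ge 0$ to $\sum_i\theta_i\tilde x^i\le Cx^*$ gives $z_{IP}(I,c)=\sum_i\theta_i z_{IP}(I,c)\le\sum_i\theta_i c\tilde x^i\le c(Cx^*)=C\,z_{LP}(I,c)$. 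Since $c$ was arbitrary, $g(I)\le C$.

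The ``only if'' direction is the crux, and I argue its contrapositive. Suppose some $x^*\in P(I)$ has $Cx^*\notin\dom(\conv(S(I)))$. Since that set is closed and convex, the separating-hyperplane theorem yields $c\in\mathbb{R}^{n}$ and $\beta\in\mathbb{R}$ with $c(Cx^*)<\beta\le cy$ for all $y\in\dom(\conv(S(I)))$. The key observation is that $c$ may be taken nonnegative: if some $c_j<0$, then because $\dom(\conv(S(I)))$ is of blocking type we could push the $j$-th coordinate of a point $y$ in it to $+\infty$, driving $cy\to-\infty$ and contradicting $\beta\le cy$. With $c\ge 0$, every $y\in S(I)\subseteq\dom(\conv(S(I)))$ has $cy\ge\beta$, hence $z_{IP}(I,c)\ge\beta$; and $x^*\in P(I)$ gives $z_{LP}(I,c)\le cx^*<\beta/C$. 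Therefore $z_{IP}(I,c)/z_{LP}(I,c)>C$ and $g(I)>C$, as needed.

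I expect the only real difficulties to be the technical underpinnings of this last step, all standard for blocking-type arguments: checking that $\dom(\conv(S(I)))$ is closed (it is the Minkowski sum of the rational polyhedron $\conv(S(I))$ with $\mathbb{R}^{n}_{\ge 0}$), the extraction of a \emph{nonnegative} separator (the one place the blocking hypothesis is essential), and dispatching the degenerate cases (e.g.\ $S(I)=\emptyset$ makes both the gap infinite and the domination vacuously false, while $z_{LP}(I,c)=0$ needs the small side-argument that then $z_{IP}(I,c)=0$ too). An alternative would be to derive the statement from Proposition~\ref{cara} via an auxiliary integral polyhedron and a scaling trick, but the direct separation proof above is shorter and is the one implicit in \cite{Carr2004}.
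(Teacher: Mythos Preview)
The paper does not actually prove Theorem~\ref{CV2}; it is quoted as a result of Carr and Vempala~\cite{Carr2004} (with the blocking-type special case attributed to Goemans) and used as a black box throughout. Your separating-hyperplane argument is correct and is precisely the standard proof of this result, so there is nothing to compare against here.
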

Goemans \cite{goemansblocking} first introduced Theorem \ref{CV2} for blocking-type polyhedra. While there is an exact algorithm for problems with gap $1$, as stated in Proposition \ref{cara}, Theorem~\ref{CV2} is existential, with no construction.
To study integrality gaps, we wish to decompose a suitably scaled linear-programming solution into a convex combination of feasible integer solutions {\bf constructively}.  That is, we ask: assuming reasonable complexity assumptions, given a specific problem $\mathcal{I}$ with  $1<g(\mathcal{I})<\infty$, and $x\in P(I)$ for some $I\in \mathcal{I}$, can we find $\theta \in [0,1]^k$, where $\sum_{i=1}^{k}\theta_i =1$ and $\tilde{x}^i\in S(I)$ for $i\in [k]$ such that $\sum_{i=1}^{k}\theta_i \tilde{x}^i\leq Cx$ in polynomial time? We wish to find the smallest factor $C$ possible.

\subsection{Algorithms and Theory Contributions} 
 
We give a general approximation framework for solving binary IPs.
Consider the set of points described by sets $S(I)$ and $P(I)$ as in (\ref{S}) and (\ref{P}), respectively. Assume in addition that $S(I)\in \{0,1\}^n$ and $P(I)\subseteq [0,1]^n$.
For a vector $x\in \mathbb{R}_{\geq 0}^n$ such that $x\in P(I)$, let the {\em support} of $x$ be $\spp(x)= \{i \in [n]: x_i \neq 0\}$. For an integer $\beta$ let $\{\beta\}^n$ be the vector $y\in \mathbb{R}^n$ with $y_i=\beta$ for $i\in [n]$.

In Section~\ref{sec:binaryfdt} we introduce the \textit{Fractional Decomposition Tree Algorithm} (FDT) which runs in polynomial time algorithm. Given a point $x\in P(I)$ FDT produces a convex combination of feasible points in $S(I)$ that are dominated coordinatewise by a ``factor" $C$ times $x$.
If $C = g(I)$, it would be optimal. However we can only guarantee a factor of $g(I)^{|\spp(x)|}$. FDT iteratively solves linear programs that are about the same size as the description of $P(I)$.

\begin{restatable}{thm}{binaryFDT}
	\label{binaryFDT}
	Assume $1\leq g(I) 	<\infty$. 	
	The Fractional Decomposition Tree (FDT) algorithm, given $x^*\in P(I)$, produces in polynomial time $\lambda\in [0,1]^k$ and $z^1,\ldots,z^k \in S(I)$ such that $k\leq |\spp(x^*)|$, $\sum_{i=1}^{k}\lambda_i z^i\leq \min(Cx^*,\{1\}^{n})$, and $\sum_{i=1}^{k}\lambda_i = 1$. Moreover, $C\leq g(I)^{|\spp(x^*)|}$.
\end{restatable}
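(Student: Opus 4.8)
The plan is to analyze FDT as a rooted tree $T$: the root is labeled by $x^*$ with weight $1$, every internal node is labeled by a point of $P(I)$ together with a weight equal to the sum of its children's weights, and every leaf is labeled by an \emph{integral} point of $P(I)$; the pairs $(\lambda_i,z^i)$ of the statement are exactly the (weight, label) pairs at the leaves. Several parts are then immediate. Each leaf label lies in $P(I)\cap\{0,1\}^n=S(I)$, so $z^i\in S(I)$. The weights are nonnegative and, since each branching step merely partitions a node's weight among its children, they sum to $1$; hence $\sum_i\lambda_i=1$. Since every $z^i\in\{0,1\}^n$ and the weights sum to $1$, we get $\sum_i\lambda_i z^i\le\{1\}^n$ coordinatewise for free. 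Finally, polynomial running time reduces to bounding the number of nodes of $T$, because processing a node only requires solving a constant number of linear programs of size polynomial in the encoding of $P(I)$.

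The heart of the argument is a \emph{one-step lemma}. At an internal node $v$ with label $y$ and a chosen fractional coordinate $j$ (so $0<y_j<1$), FDT solves a linear program that produces children labeled by $y^{0},y^{1}\in P(I)$ with $y^{0}_j=0$ and $y^{1}_j=1$, together with split weights $w^{0},w^{1}\ge 0$, $w^{0}+w^{1}=1$, such that
\[
w^{0}y^{0}+w^{1}y^{1}\;\le\;g(I)\,y .
\]
I would prove this by applying the Carr--Vempala characterization (Theorem~\ref{CV2}) with $C=g(I)$ to the point $y\in P(I)$: it yields integral points $z^{t}$ (replace any dominating point by the $S(I)$-point it dominates) and weights $\theta_t$ with $\sum_t\theta_t z^{t}\le g(I)y$ and $\sum_t\theta_t=1$. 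Grouping the $z^t$ by the value of coordinate $j$ and averaging within each group gives two points of $\conv(S(I))\subseteq P(I)$ on the faces $\{x_j=0\}$ and $\{x_j=1\}$ whose corresponding convex combination is $\le g(I)y$; this certifies that the branching LP FDT actually solves (minimize the inflation factor over choices of the two face points and the split weight, which is linear after the substitution $u^b=w^b y^b$) has optimum at most $g(I)$. Degenerate cases --- a coordinate-fixing face being empty, or the group with $z^t_j=1$ having zero weight --- are handled separately, with FDT creating a single child and no inflation.

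Given the one-step lemma, a downward induction on the height $d(v)$ of the subtree at $v$ gives
\[
\sum_{\ell\in\mathrm{leaves}(v)}w^{\ell}\,y^{\ell}\;\le\;g(I)^{\,d(v)}\,w^{v}\,y^{v},
\]
the base case being a leaf and the inductive step combining the one-step lemma with $g(I)\ge 1$ and $d(v)=1+\max_c d(c)$. At the root this yields $\sum_i\lambda_i z^{i}\le g(I)^{\,d(\rho)}x^*$, so it remains to bound $d(\rho)$ and the number of leaves by $|\spp(x^*)|$. For the depth: each branching step resolves a coordinate (the children are integral in coordinate $j$, and FDT carries this constraint into the subproblem on that branch), and the aggregate dominance forces every label to vanish off $\spp(x^*)$, so coordinates outside $\spp(x^*)$ are never branched on; hence every root--leaf path has length at most $|\spp(x^*)|$ and $C=g(I)^{d(\rho)}\le g(I)^{|\spp(x^*)|}$. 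For the leaf count, I would argue that the branching LP can be solved so that one of the two produced points is already integral (a suitable vertex of a lower-dimensional face of $P(I)$), making $T$ a path with leaves attached, whence $k\le|\spp(x^*)|$. Combining with the easy parts gives $\sum_i\lambda_i z^{i}\le\min(Cx^*,\{1\}^n)$.

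The main obstacle is precisely this one-step lemma together with the depth/width bookkeeping: one must (a) show the branching LP is feasible with a \emph{finite} factor, i.e.\ that the relevant coordinate-fixing faces are nonempty; (b) push the factor down to $g(I)$ rather than something larger, and in particular either show the Carr--Vempala bound survives restriction to the face carved out by previously fixed coordinates, or restructure the split so that Carr--Vempala is only ever invoked on points of $P(I)$ itself while each branch but one still contributes an integral leaf; and (c) verify that resolving one coordinate per step and keeping at most one internal child per node genuinely caps both the depth and the leaf count at $|\spp(x^*)|$. The first and last are bookkeeping, but the $g(I)$-versus-larger dichotomy in (b) is where the integrality-gap hypothesis must be used in full, and is the step I expect to be the most delicate.
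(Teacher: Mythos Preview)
Your one-step lemma and its proof via Carr--Vempala grouping is exactly what the paper does (Lemma~\ref{LPClemma} and Claim~\ref{CVexists}). But two parts of your plan diverge from the paper in ways that leave real gaps.

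\textbf{Leaf count.} Your mechanism for $k\le|\spp(x^*)|$ is the claim that ``the branching LP can be solved so that one of the two produced points is already integral,'' making $T$ a caterpillar. This is not justified and is generally false: fixing one coordinate of a point of $P(I)$ to $0$ or $1$ lands you on a facet whose vertices need not be integral (think of the fractional matching polytope of $K_4$). The paper does \emph{not} control width this way. Instead it lets the tree double at every level and then, at each level, solves a separate \emph{pruning} LP $\max\{\sum_j\theta_j:\sum_j\theta_j x^j\le x^*,\ \theta\ge 0\}$ over the current node set; a basic feasible solution has at most $t=|\spp(x^*)|$ nonzeros because there are only $t$ nontrivial constraints, so all but $t$ nodes can be discarded without hurting properties (b) and (c) of Lemma~\ref{prune}. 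This Carath\'eodory-style pruning is the missing idea.

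\textbf{Maintaining integrality and landing in $S(I)$.} You write that ``FDT carries this constraint into the subproblem,'' and later that the leaves lie in $P(I)\cap\{0,1\}^n=S(I)$. If ``carrying the constraint'' means restricting the branching LP to the face $\{x_j=b\}$, then Carr--Vempala no longer gives factor $g(I)$ on that face (you flag this in (b) but do not resolve it). If instead you don't restrict, the previously fixed coordinates can become fractional again after the next branch. The paper resolves this tension differently: after branching it \emph{rounds up} the already-branched coordinates (Lemma~\ref{round-up}), which keeps them in $\{0,1\}$ at the cost of moving the points from $P$ into $\dom(P)$. Consequently the leaves are only guaranteed to be in $\dom(P)\cap\{0,1\}^n$, not in $S(I)$, and a final nontrivial step (the \textsc{DomToIP} algorithm of Lemma~\ref{domlemma}) is needed to push each leaf down to a point of $S(I)$. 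Your plan omits both the round-up and \textsc{DomToIP}, so as stated it either loses the $g(I)$-per-level bound or fails to deliver points of $S(I)$.
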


In Section~\ref{sec:domTOIP} we describe a subroutine of the FDT, called the DomToIP algorithm, which finds feasible solutions to any IP with finite gap. This can be of independent interest, especially in proving that a model has unbounded gap.
\begin{restatable}{thm}{DomToIP}
	\label{domtoIP}
	Assume $1\leq g(I) < \infty$. The DomToIP algorithm finds $\hat{x}\in S(I)$ in polynomial time.
\end{restatable}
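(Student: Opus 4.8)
The plan is to reduce Theorem~\ref{domtoIP} to a short sequence of linear-programming feasibility queries, invoking the finite-gap hypothesis only through the following structural fact, which I would establish first: \emph{if $1\le g(I)<\infty$ then $P(I)\neq\emptyset$, and for every $x\in P(I)$ there is a point $\tilde x\in S(I)$ with $\spp(\tilde x)\subseteq\spp(x)$.} To get it, apply Theorem~\ref{CV2} with any $C\ge g(I)$ to a point $x\in P(I)$: this produces a convex combination $\sum_i\theta_i\tilde x^i\le Cx$ with $\theta\ge 0$, $\sum_i\theta_i=1$ and each $\tilde x^i\in\dom(S(I))$. Because $S(I)\subseteq\{0,1\}^n$, every element of $\dom(S(I))$ is nonnegative, so in each coordinate $j\notin\spp(x)$ the inequality $\sum_i\theta_i\tilde x^i_j\le Cx_j=0$ forces $\tilde x^i_j=0$ for every $i$ with $\theta_i>0$; picking such an $i$ and dominating $\tilde x^i$ from below by an actual member of $S(I)$ gives the desired $\tilde x$.

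With this in hand, DomToIP maintains a set $T\subseteq[n]$ together with an LP point $x'\in P(I)$ supported exactly on $T$, under the invariant ``there exists $\tilde x\in S(I)$ with $\spp(\tilde x)\subseteq T$.'' It is initialized by solving one LP to obtain any $x^\ast\in P(I)$ (possible since $P(I)\neq\emptyset$) and setting $T=\spp(x^\ast)$, $x'=x^\ast$; the invariant holds by the structural fact. The main loop repeatedly tries to shrink $T$: for each $j\in T$ it solves $\min\{\,x_j:\ x\in P(I),\ x_i=0\ \text{for }i\notin T\,\}$, and if for some $j$ an optimal $x'$ has $x'_j=0$, it replaces $T$ by $\spp(x')\subsetneq T$ (the invariant is restored by applying the structural fact to $x'\in P(I)$). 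Since $|T|$ strictly decreases, the loop halts after at most $n$ iterations, each costing at most $n$ LP solves; an equivalent recursive formulation is also possible.

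When the loop halts, every point of $P(I)$ whose support lies in $T$ has $x_j>0$ for all $j\in T$. Combined with the invariant, any witness $\tilde x\in S(I)$ with $\spp(\tilde x)\subseteq T$ must have $\tilde x_j=1$ for every $j\in T$: if some $\tilde x_j=0$ then $\tilde x$ itself is a point of $P(I)$ supported in $T$ with a zero coordinate, a contradiction. Hence $\tilde x=\mathbf{1}_T$, the characteristic vector of $T$, so in particular $\mathbf{1}_T\in S(I)$; DomToIP returns $\hat x=\mathbf{1}_T$ after checking $A\hat x\ge b$. Correctness follows from the invariant and the halting condition, and the algorithm runs in polynomial time because it performs $O(n^2)$ LP solves and one final matrix--vector check. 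As a by-product, if that final check ever fails the finite-gap promise must have been violated, so DomToIP can instead certify $g(I)=\infty$; this is the use of DomToIP for detecting unbounded gap.

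The only genuinely non-routine step is isolating and proving the structural fact in the first paragraph; everything afterward is bookkeeping around LP feasibility. The secondary subtlety to be careful about is that the witness $\tilde x$ is never computed by the algorithm---it lives only in the analysis---so preservation of the invariant under the $T$-shrinking step, and the claim that the halting condition forces $\tilde x=\mathbf{1}_T$, must each be derived by applying the structural fact to the concrete LP point at hand.
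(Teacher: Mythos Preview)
Your proposal is correct and rests on the same key idea as the paper: Theorem~\ref{CV2} applied to any $x\in P(I)$ yields an integer feasible point whose support is contained in $\spp(x)$, and this structural fact lets one greedily push coordinates to zero until the remaining support must be an integer solution. The paper uses this fact inside the same inductive argument (it shows the set $\IP^{(\ell)}$ is nonempty by invoking Theorem~\ref{CV2} on the current LP point), so the conceptual core is identical.

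The differences are in bookkeeping and strength. The paper's DomToIP processes coordinates in a fixed order and, when coordinate $\ell+1$ cannot be driven to zero, commits to $x^{(\ell+1)}_{\ell+1}=1$ and \emph{enforces that equality} in all later LPs; this yields exactly $t$ LP solves rather than your $O(n^2)$. More importantly, the paper proves the stronger Lemma~\ref{domlemma}: given any $\tilde{x}\in\dom(P)\cap\{0,1\}^n$, the output satisfies $\bar{x}\le\tilde{x}$. This domination property is what FDT actually needs when it calls DomToIP on the leaves of the decomposition tree, so if your goal is only Theorem~\ref{domtoIP} your version suffices, but if you want to plug the routine into the FDT proof you would need to upgrade it (which is easy: start from $T=\spp(\tilde{x})$ rather than from the LP optimum, and note that the returned $\mathbf{1}_T$ satisfies $\mathbf{1}_T\le\tilde{x}$ since $T$ only shrinks). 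Conversely, your presentation has the advantage of isolating the structural fact cleanly and making the unbounded-gap certificate explicit.
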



Here is how the FDT algorithm works at a high level for an instance $I$ with LP feasible region $P(I)$: in iteration $i$ the algorithm maintains a convex combination of  vectors in $\mathcal{D}(P(I))$ that have a 0 or 1 value for coordinates indexed $0,\ldots,i-1$. Let $y$ be a vector in the convex combination from iteration $i-1$. We solve a linear-programming problem that gives us $\theta_0,\theta_1\in [0,1]$ and $y^0,y^1\in \mathcal{D}(P(I))$ such that $y\geq \theta_0 y^0 + \theta_1 y^1$, $\theta_0+\theta_1\leq g(I)$, $y^0_i=0$ and $y^1_i=1$. We then replace $y$ in the convex combination with $\frac{\theta_0}{\theta_0+\theta_1}y^0 +\frac{\theta_1}{\theta_0+\theta_1}y^1$.  Repeating this for every vector in the convex combination from the previous iteration yields a new convex combination of points. It is ``more'' integral because now all vectors in the convex combination are integral in their first $i+1$ elements. If in any iteration there are too many points in the convex combination,  we solve a linear-programming problem that ``prunes'' the convex combination. At the end we have a convex combination of integer solutions $\mathcal{D}(P(I))$. For each such solution $z$ we invoke the DomToIP algorithm
to find $z'\in S(I)$ where $z'\leq z$.

One can extend the FDT algorithm for binary IPs into covering\footnote{A covering IP has nonnegative constraint matrix, objective coefficients and right-hand side ($A,b,c$).} $\{0,1,2\}$ IPs by losing a factor $2^{|\spp(x)|}$ on top of the loss for FDT. To eradicate this extra factor, we must treat the coordinate $i$ with $x_i=1$ differently. In Section~\ref{sec:2EC} we focus on the 2-edge-connected multi-subgraph graph problem (2EC) defined above.  It's subtour-elimination LP relaxation is given in (\ref{eq:subtour}).
For input graph $G$, let $\subtour(G)$ denote the feasible region of this LP. Let $\2ec(G)$ be the convex hull of incidence vectors\footnote{The incidence vector $x$ of a 2-edge-connected multi-subgraph has an element for each edge $e$, with $x_e \in \{0,1,2\}$ indicating the number of times edge $e$ appears in the solution.} of 2-edge-connected multi-subgraphs of graph $G$. Following the definition in (\ref{gapproblem}) have
\begin{equation}
g(\2ec) = \max_{c\geq 0 , G}\frac{\min_{x\in \2ec(G)} cx}{\min_{x\in \subtour(G)} cx}.
\end{equation}

\begin{restatable}{thm}{FDTEC}
	\label{FDT2EC}
	Let $G=(V,E)$ and $x$ be an extreme point of  $\subtour(G)$. The FDT algorithm for 2EC produces $\lambda\in [0,1]^k$ and 2-edge-connected multi-subgraphs $F_1,\ldots,F_k$ such that $k\leq 2|V|-1$, $\sum_{i=1}^{k}\lambda_i \chi^{F_i}\leq \min(Cx,\{2\}^n)$, and $\sum_{i=1}^{k}\lambda_i = 1$. Moreover, $C\leq g(\2ec)^{|E_x|}$.
\end{restatable}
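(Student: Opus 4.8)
The plan is to follow the same structure as the general binary FDT result (Theorem~\ref{binaryFDT}), but to exploit the special combinatorial structure of the subtour polytope so that we pay $g(\2ec)^{|E_x|}$ rather than $g(\2ec)^{|E|}$, where $E_x$ is the edge set on which $x$ is strictly fractional. The key starting observation is that an extreme point $x$ of $\subtour(G)$ has most coordinates already integral: $x_e \in \{0,1,2\}$ for $e \notin E_x$. Edges with $x_e=0$ can be deleted from $G$ outright; edges with $x_e=2$ (or $x_e=1$ that are forced) contribute a fixed amount to every solution in the convex combination we build and can be handled separately rather than branched on. So the ``branching'' phase of FDT only needs to run over the coordinates in $E_x$, which is exactly where the exponent $|E_x|$ comes from. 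I would first make this reduction precise: set up a residual instance on the fractional edges, and argue that a decomposition there lifts back to a decomposition for $x$ by adding back the forced edges to each term.

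Next I would run the FDT branching loop of Section~\ref{sec:binaryfdt} on the coordinates of $E_x$ in some fixed order. At each step, for each vector $y$ currently in the convex combination (which lies in $\dom(\subtour(G))$ and is integral on the coordinates processed so far), I solve the FixNextVarLP: find $\theta_0,\theta_1 \ge 0$ and $y^0, y^1 \in \dom(\subtour(G))$ with $y \ge \theta_0 y^0 + \theta_1 y^1$, $\theta_0 + \theta_1 \le g(\2ec)$, $y^0_e = 0$, $y^1_e = 2$ (or $=1$, depending on how the $\{0,1,2\}$ structure is handled — here one must be careful that the ``1'' values never need to be branched, which is the point of treating 2EC separately from generic covering IPs). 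The feasibility of this LP is exactly where $g(\2ec)$ enters, via Carr--Vempala (Theorem~\ref{CV2}) applied to the residual instance. I would then invoke the pruning LP whenever the number of terms exceeds $|E_x|+1$ (adapted to give the bound $k \le 2|V|-1$, using that an extreme point of the subtour polytope has at most $2|V|-1$ nonzero coordinates, hence $|E_x| \le 2|V|-1$). After all coordinates in $E_x$ are processed, every vector in the convex combination is integral on $E_x$ and integral (from the reduction) elsewhere, so it lies in $\dom(\2ec(G))$; then DomToIP (Theorem~\ref{domtoIP}) converts each such dominating integral vector into an actual 2-edge-connected multi-subgraph $F_i$ with $\chi^{F_i} \le$ that vector. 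Bounding the cost: each of the $|E_x|$ branching steps multiplies the scaling factor by at most $g(\2ec)$, giving $C \le g(\2ec)^{|E_x|}$; the $\min(Cx,\{2\}^n)$ cap follows because coordinates are in $[0,2]$ throughout and the dominant relation is preserved.

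The main obstacle I expect is handling the coordinates with $x_e = 1$ cleanly, and more generally making the residual/forced-edge reduction interact correctly with the $\{0,1,2\}$ nature of 2EC. In a generic covering $\{0,1,2\}$ IP one would branch each coordinate into three cases and lose a factor $2^{|\spp(x)|}$ on top of FDT's loss (as the excerpt notes); the whole point of Theorem~\ref{FDT2EC} is that for 2EC we can avoid this. So I must argue that in the branching LP it suffices to branch each fractional coordinate into just $\{0,2\}$ (or fold in the intermediate value without a separate branch), which will rely on a monotonicity/rounding property special to 2-edge-connectivity — e.g. that rounding a coordinate up to $2$ only helps connectivity, and that the subtour LP's constraints behave well under this. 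Getting the accounting exactly right so that the exponent is $|E_x|$ and not $|E_x|$ plus something, and so that $k \le 2|V|-1$ rather than merely $\le |E_x|+1$, is the delicate bookkeeping part; the polynomial-time claim is then routine since we solve $O(|E_x| \cdot |V|)$ linear programs each of size polynomial in the description of $\subtour(G)$, plus $k$ calls to DomToIP.
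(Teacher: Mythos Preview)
Your proposal has a genuine gap in exactly the place you flag as the ``main obstacle.'' You suggest branching each fractional coordinate into $\{0,2\}$ only, or alternatively freezing the $1$-edges via a residual instance. Neither works, and the paper does neither. Branching into $\{0,2\}$ fails because the Carr--Vempala decomposition of $gx$ will in general contain multi-subgraphs using a single copy of $e$; you cannot force these into the $0$- or $2$-bucket without either losing feasibility or blowing up the coordinate past $x_e$. The residual-instance idea fails because once you branch on one coordinate, the child vectors $x^0,x^1,x^2$ need not preserve integrality on the other coordinates you wanted to freeze.

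What the paper actually does is a three-way branch $x^0,x^1,x^2$ with $x^j_e=j$, but it adds a crucial extra constraint to the branching LP: for every edge $f$ with $x_f\ge 1$, each child must also have $x^j_f\ge 1$ (property~(v) of Lemma~\ref{LPC2EC}). Showing this constrained LP still achieves $\gamma_0+\gamma_1+\gamma_2\ge 1/g(\2ec)$ is nontrivial; the paper proves it by replacing each $\ge 1$ edge with a long path, applying Carr--Vempala on the modified graph, and arguing that in the resulting decomposition almost every path edge is used, which translates back to $x^j_f\ge 1$. With property~(v) in hand, the invariant maintained level by level is not ``processed coordinates are integral'' but merely ``processed coordinates are $0$ or $\ge 1$'' (Lemma~\ref{2ecpruning}). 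At the leaves one then takes the \emph{floor} of each vector; a short argument (Lemma~\ref{rounddown}) shows $\lfloor x\rfloor\in\subtour(G)$ whenever every coordinate is $0$ or $\ge 1$. So there is no rounding-up during the tree and no call to DomToIP at the end. Finally, $|E_x|$ in the theorem is the support of $x$, not the strictly fractional edges; the branching runs over all of $\spp(x)$, and $k\le 2|V|-1$ comes from the bound on the support size of a subtour extreme point.
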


\subsection{Experiments.} Although the bounds guaranteed in both Theorems \ref{binaryFDT} and \ref{FDT2EC} are large, in Section~\ref{sec:experiment} we show that in practice, the algorithm can work well for network design problems like those described above. We also show how one might use FDT to investigate the integrality gap for such well-studied problems.

\subsubsection{Minimum vertex cover problem}

In the \textsc{minimum vertex cover problem (VC)} we are given a graph $G=(V,E)$ and $c\in \mathbb{R}^E_{\geq 0}$. A subset of $U$ of $V$ is a \textit{vertex cover} if for all $e\in E$ at least one endpoint of $e$ is in $U$. The goal in VC is to find the minimum-cost vertex cover. The linear-programming relaxation for VC is
\begin{equation}
\min \{cx \; : \; x_u + x_v \geq	 1 \text{ for all } e=uv \in E, \; x\in [0,1]^{V}\}.
\end{equation}
The integrality gap of this formulation is exactly 2 \cite{davids}. Austrin, Khot and Safra~\cite{UGhardVC} show that it is UG-hard to approximate VC within any factor strictly better than 2. We compare  FDT and the feasbility pump heuristic \cite{fp1} on the small instances of the PACE\footnote{ Parameterized Algorithms and Computational Experiments: \url{https://pacechallenge.org/2019/}} 2019 challenge test cases \cite{PACE}. FDT was 3-5x slower, but still ran in seconds. It always gave a better vertex cover. For some instances the FDT solution's relative gap with respect to the optimal was half that of feasibility pump.
\subsubsection{Tree augmentation problem}
In the \textsc{Tree Augmentation Problem (TAP)} we are given a  graph $G=(V,E)$, a spanning tree $T$ of $G$ and a cost vector $c\in \mathbb{R}^{E\setminus T}_{\geq 0}$. A subset $F$ of $E\setminus T$ is called a \textit{feasible augmentation} if $(V,T\cup F)$ is a 2-edge-connected graph. In TAP we seek the minimum cost feasible augmentation. The natural linear-programming relaxation for TAP is 
\begin{equation}\label{eq:cutLP}
\min \{cx\; : \; \sum_{\ell \in \cov(e)} x_{\ell} \geq 1 \text{ for } e\in T, \; x\in [0,1]^{E\setminus T}\}.
\end{equation}
where $\cov(e)$ for $e \in T$ is the set of edges $\ell \in E\setminus T$ such that $e$ is in the unique cycle of $T\cup \{\ell\}$. Another way to think of $\cov(e)$ is that if we remove edge $e$ from tree $T$, this partitions the vertices into two connected sets $U$ and $V-U$.  Then $\cov(e) = \delta(U)$ is the set of edges $\ell \neq e$ that cross the cut between $U$ and $V-U$, and whose addition would reconnect the tree. We call the LP above the cut LP. The integrality gap of the cut LP is known to be between $\frac{3}{2}$ \cite{32gaptap} and $2$ \cite{FJ81}. We create random fractional extreme points of the cut LP and apply FDT to find integral solutions. For our instances, the ratio of the value of the feasible solution to the LP lower bound is always below $\frac{3}{2}$. This provides evidence that the integrality gap for such instances may be less than $\frac{3}{2}$.

\subsubsection{2-edge-connected multi-subgraph problem}
\label{sec:2EC-intro}
Known polyhedral structure makes it easier to study integrality gaps for such problems. We use the idea of fundamental extreme point \cite{carrravi,boydcarr,Carr2004} to create the ``hardest'' LP solutions to decompose.

There are fairly good bounds for the integrality gap for the Traveling Salesman Problem (TSP, see \cite{tspbook}) or 2EC.
Benoit and Boyd~\cite{TSPcompute} used a quadratic program to show the integrality gap of the subtour-elimination relaxation for the TSP, $g(\tsp)$, is at most $\frac{20}{17}$ for graphs with at most 10 vertices. Alexander et al. \cite{alexander2006integrality} used the same ideas to provide an upper bound of $\frac{7}{6}$ for $g(\2ec)$ on graphs with at most 10 vertices. 

Consider a graph $G=(V,E)$. A \textit{Carr-Vempala point} $x\in \mathbb{R}^E$ is a fractional point in $\subtour(G)$ where the edges $e$ with $0<x_e<1$ form a single cycle in $G$ and the vertices on the cycle are connected via vertex-disjoint paths of edges $e$ with $x_e =1$ (see Figure \ref{fig:CVpoint}).

Carr and Vempala \cite{Carr2004} showed that $g(\2ec)$ is achieved for instances where the optimal solution to $\min_{x\in \subtour(G)}cx$ is a Carr-Vempala point. Multiple groups of researchers have conjectured that $g(\2ec)\leq \frac{6}{5}$ (see \cite{alexander2006integrality,boydlegault,hn19}, but the only known upper bound on $g(\2ec)$ stands at $\frac{3}{2}$ \cite{wolsey}. We show that the integrality gap is at most $\frac{6}{5}$ for Carr-Vempala points with at most 12 vertices on the cycle formed by the fractional edges, thereby confirming the conjecture for these instances.
\cindy{This sounds like a contribution of the paper.} \arash{Yes, particlarly since a similar research question have been studied by \cite{alexander2006integrality} and \cite{TSPcompute}}
Note that the number of vertices in these instances can be arbitrarily high since the paths of edges with $x$-value 1 can be arbitrarily long.

\cindy{We probably need an explicit list of contributions.  It could be another subsection of the introduction.  This will help the reviewer and eventually, if published, the reader.}
\subsection{Contribution summary}
In summary, this papers's contributions are:
\begin{itemize}
\item We give an algorithm to express any feasible point for the LP relaxation of a binary IP
as a convex combination of feasible solutions, provided the IP integrality gap is bounded.  This is the Fractional Decomposition Tree (FDT) algorithm.
\item We give a simple  algorithm to give a feasible solution for any binary IP with bounded integrality gap. If this algorithm fails, then the integrality gap is infinite.
\item We experimentally show that FDT can give good approximate solutions to small instances of vertex-cover and tree augmentation problems.
\item We show that the integrality gap for 2EC (minimum 2-edge-connected sub-multigraph) is at most $\frac{6}{5}$ for Carr-Vempala points with at most $12$ vertices on the cycle formed by the fractional edges, matching the lower bound for integrality gap for 2EC for these instances.
\item We demonstrate (with tree augmentation and 2EC) how FDT exeriments can support theoretical work on integrality gaps for specific problems.
\end{itemize}

\section{Finding a Feasible Solution}\label{sec:domTOIP}
In this section we give the algorithm for DomToIP and prove its performance (Theorem~\ref{domtoIP}).

Consider an IP instance $I=(A,b)$. Define sets $S(I)$ and $P(I)$ as in (\ref{S}) and (\ref{P}), respectively. Assume $S(I)\subseteq \{0,1\}^n$ and $P(I)\subseteq [0,1]^n$. For simplicity in the notation we assume an instance $I$ and denote $P(I),S(I),$ and $g(I)$ by $P$, $S$, and $g$ for this section and the next section. Also, for both sections let $x^*$ be the optimal solution to the LP formulation and assume $t=|\spp(x^*)|$. Reorder variables as necessary to put all the nonzero variables into the first $t$ indices. So we can assume $x^*_i = 0$ for $i=t+1,\ldots,n$.

In this section we prove Theorem \ref{domtoIP}. In fact, we prove a stronger result. 
\begin{lemma}\label{domlemma}
	Given an integral vector in the dominant of an LP relaxation ($\tilde{x}\in \dom(P)$ and $\tilde{x}\in \{0,1\}^n$) for an IP instance that has integrality gap $g < \infty$, there is an algorithm (the DomToIP algorithm) that finds $\bar{x}\in S$ in polynomial time such that $\bar{x}\leq \tilde{x}$.\end{lemma}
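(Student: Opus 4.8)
The plan is to build $\bar x$ explicitly, fixing one coordinate at a time with $O(n)$ linear-program solves, and to use the finite-gap hypothesis at each fixing to certify that solvability is never lost. Write $T=\spp(\tilde x)$ and $Z=[n]\setminus T$. Since $\tilde x\in\{0,1\}^n$ and $S\subseteq\{0,1\}^n$, a point $\bar x\in S$ satisfies $\bar x\le\tilde x$ if and only if $\bar x_i=0$ for all $i\in Z$, so it suffices to produce such a point. For a polytope $Q\subseteq[0,1]^n$ with $Q\cap\mathbb Z^n\neq\emptyset$ let $(\star)_Q$ be the statement: \emph{for every cost $c\ge 0$ with $\min_{x\in Q}cx=0$ one also has $\min_{x\in Q\cap\mathbb Z^n}cx=0$}. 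The hypothesis $g<\infty$ gives $(\star)_P$: a point $x\in P$ with $cx=0$ alongside an integer feasible solution of positive $c$-value would make $g$ infinite. Since $\tilde x\in\dom(P)$ there is $y\in P$ with $y\le\tilde x$, and $P\subseteq[0,1]^n$ forces $y\ge 0$, hence $y_i=0$ for $i\in Z$; applying $(\star)_P$ with $c=\mathbf 1_Z$ (whose LP optimum over $P$ is $0$, witnessed by $y$) shows $S$ contains a point vanishing on $Z$. This proves \emph{existence}; also $S\neq\emptyset$, since if $S=\emptyset$ then $0\notin P$ (else $0\in S$), so $\mathbf 1$ has positive LP optimum over $P$ and infinite IP optimum, contradicting $g<\infty$. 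The remaining work is to make the construction algorithmic.

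\emph{The algorithm.} Set $Q_0:=P\cap\{x_i\le 0:i\in Z\}$; it is nonempty (it contains $y$), is contained in $[0,1]^n$, contains an integer point, and satisfies $(\star)_{Q_0}$ --- the last two facts by the argument of the previous paragraph with $c$ replaced by $c+\mathbf 1_Z$ over $P$. For $j=1,\dots,n$ solve the linear program $\mu_j:=\min\{x_j:x\in Q_{j-1}\}$; if $\mu_j=0$ set $v_j:=0$ and $Q_j:=Q_{j-1}\cap\{x_j\le 0\}$, and if $\mu_j>0$ set $v_j:=1$ and $Q_j:=Q_{j-1}\cap\{x_j\ge 1\}$. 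The invariants I maintain are: $Q_j$ is nonempty, $Q_j$ contains an integer point, $(\star)_{Q_j}$ holds, and every integer point of $Q_j$ agrees with $(v_1,\dots,v_j)$ on its first $j$ coordinates. Granting these, $Q_n\cap\mathbb Z^n$ is nonempty and each of its members equals $v:=(v_1,\dots,v_n)$, so $v\in Q_n\cap\mathbb Z^n\subseteq P\cap\mathbb Z^n=S$, while $v\in Q_0$ forces $v_i=0$ on $Z$, i.e.\ $v\le\tilde x$; output $\bar x:=v$. This is $n$ LP solves on instances of size $|I|+O(n)$, hence polynomial.

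\emph{Maintaining the invariants is the crux.} If $\mu_j>0$, then any integer $x\in Q_{j-1}$ has $x_j\ge\mu_j>0$ with $x_j\in\{0,1\}$, so $x_j=1$; thus $Q_{j-1}$ and $Q_j$ have the \emph{same} integer points, which immediately gives nonemptiness, the integer-point clause, and consistency, and preserves $(\star)$ since a $Q_j$-point with $cx=0$ also lies in $Q_{j-1}$, so $(\star)_{Q_{j-1}}$ supplies an integer point of $Q_{j-1}=Q_j$ with $c$-value $0$. If $\mu_j=0$, then $\min_{x\in Q_{j-1}}x_j=0$, so $(\star)_{Q_{j-1}}$ with $c=e_j$ yields an integer point of $Q_{j-1}$ with $j$-th coordinate $0$, which lies in $Q_j$ and handles nonemptiness, the integer-point clause, and consistency; and to preserve $(\star)$, if a $Q_j$-point $x'$ has $cx'=0$ then, all coordinates being in $[0,1]$, $x'_j=0$, hence $\min_{x\in Q_{j-1}}(c+e_j)x=0$, so $(\star)_{Q_{j-1}}$ produces an integer $\bar x\in Q_{j-1}$ with $c\bar x=0$ and $\bar x_j=0$, i.e.\ $\bar x\in Q_j$. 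This closes the induction.

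I expect the one real obstacle to be exactly this gap-preservation step: an ``$x_j\le 0$'' cut is absorbed by the cost shift $c\mapsto c+e_j$, and an ``$x_j\ge 1$'' cut is harmless because it deletes no integer points --- but the induction only closes if one carries the property $(\star)$ through the steps rather than the literal value $g(Q_j)$, so the key design decision is choosing that invariant. Everything else (the LP solves, reading off $v$, and the final domination $\bar x\le\tilde x$) is routine.
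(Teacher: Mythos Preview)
Your proof is correct and the algorithm is the same as the paper's: fix coordinates one at a time, each time minimizing the next coordinate over the current LP-feasible region and setting the bit to $0$ or $1$ accordingly. The proof strategies differ, however. The paper re-invokes the Carr--Vempala theorem (Theorem~\ref{CV2}) at every step: given the current LP witness $z^*$, it decomposes $gz^*$ as a convex combination of integer points, observes these must vanish wherever $z^*$ does, and then separately argues (by a minimal-counterexample-index argument) that such an integer point also matches the previously fixed $1$-coordinates. You instead isolate the single consequence of $g<\infty$ that is needed---the property $(\star)_Q$ that zero LP optimum implies zero IP optimum---and carry \emph{that} through the induction, absorbing an added face $\{x_j\le 0\}$ by the cost shift $c\mapsto c+e_j$ and an added face $\{x_j\ge 1\}$ by noting it discards no integer points. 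This is more self-contained (no appeal to Theorem~\ref{CV2}) and sidesteps the paper's somewhat delicate verification that the Carr--Vempala witness respects all earlier equality constraints; on the other hand, the paper's route makes the connection to the decomposition theorem explicit, which is thematically useful since the rest of the paper is about constructive versions of that theorem.
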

Lemma \ref{domlemma} implies Theorem \ref{domtoIP}, since it is easy to obtain an integer point in $\dom(P)$: numerically rounding up $x^*$ (or any fractional point in $P$) gives us a point in $\dom(P)$.
Hence, we can assume in the proof below that $\tilde{x}_i= 0$ for $i=t+1,\ldots,n$. 
\subsection{Proof of Lemma \ref{domlemma}: The DomToIP Algorithm}

We introduce an algorithm that builds a solution from the input integral point $\tilde{x}$. It finalizes a binary value for each variable iteratively, starting from the first coordinate and ending at the $t$-th coordinate.
In iteration $\ell\in \{0,\ldots,t-1\}$, if $\tilde{x}_{\ell+1}=1$, the algorithm reduces the $\ell + 1$st coordinate of the solution to $0$ if there is an LP-feasible point $x'$ that respects the decisions made so far, and has $x'_{\ell}=0$. Intuitively, we can fix this coordinate to $0$ in the partial solution and still set the remaining coordinates to find the solution promised in Lemma~\ref{domlemma}.
\arash{A caution here: the phrase if possible a bit vague here. We round down to zero, if there is feasible solution to LP (so a point in $P(I)$, such that the first $\ell$ coordinates of it are all equal to the point $x^{(\ell)}$ that we have in iteration $\ell$ and its $\ell+1$-th coordinate is 0}.
\cindy{Yes.  I was just trying to give intuition. In particular, I was trying to distinguish Dom2IP that starts with an integral vector, where we push $1$s down, from FDT, where we start with a fractional vector and progressively set components to binary values. Do you think the above version is OK?}
More specifically, in iteration $\ell\in \{0,\ldots,t-1\}$ we produce $x^{(\ell)}\in \dom(P)$ such that $x^{(\ell)}_i\in \{0,1\}$ for $i=1,\ldots,\ell$. These $\ell$ components will not change for the remainder of the algorithm.

We can set $x^{(0)}=\tilde{x}$. We now show how to find $x^{(\ell+1)}$ from $x^{(\ell)}$. Consider the following linear program. The variables of this linear program are the $z\in \mathbb{R}^n$.
\begin{align}
\HelperLP(x^{(\ell)})\quad\quad& \min\quad \;z_{\ell+1}\\
&\;\text{s.t.} \quad \;\;Az\geq b \\
&\;{\color{white}{\text{s.t.}} }\quad \;\; \; z_j = x^{(\ell)}_j \quad \; j =1,\ldots, \ell\\
&\;{\color{white}{\text{s.t.}} }\quad \; \;\; z_j \leq x^{(\ell)}_j \quad \; j = \ell+1,\ldots,n\\
&\;{\color{white}{\text{s.t.}} }\quad \; \;\; z\;\geq 0
\end{align}

If the optimal value to $\HelperLP(x^{(\ell)})$ is 0, then let $x^{(\ell+1)}_{\ell+1} = 0$. Otherwise if the optimal value is strictly positive let $x^{(\ell+1)}_{\ell+1} = 1$. Let $x^{(\ell+1)}_j = x^{(\ell)}_j$ for $j\in [n]\setminus \{\ell+1\}$. The DomToIP algorithm initializes with $x^{(0)}=\tilde{x}$ and iteratively calls this procedure in order to obtain $x^{(t)}$ (See Algorithm \ref{domtoIPalg}).
\cindy{I rearranged the above paragraph a bit without changing the content.  Please see if you think it's OK.}

\vspace*{10pt}
\begin{algorithm}[h]
	\KwIn{$\tilde{x}\in \dom(P)$, $\tilde{x}\in \{0,1\}^n$ }
	\KwOut{$x^{(t)} \in S$, $x^{(t)}\leq \tilde{x}$}
	$x^{(0)}\leftarrow \tilde{x}$\\
	\For{$\ell = 0$ \textbf{to} $t-1$}{
		$x^{(\ell+1)} \leftarrow x^{(\ell)}$\\
		$\eta \leftarrow$ optimal value of $ \HelperLP(x^{(\ell)})$\\
		\eIf{$\eta = 0$}{
			$x^{(\ell+1)}_{\ell+1} \leftarrow 0$\
		}{
			$x^{(\ell+1)}_{\ell+1} \leftarrow 1$
		}
	}
	\caption{The DomToIP algorithm}
	\label{domtoIPalg}
\end{algorithm}
\vspace*{10pt}

We prove that indeed $x^{(t)}\in S$. First, we need to show that in any iteration $\ell=  0,\ldots,t-1$ of DomToIP that $\HelperLP(x^{(\ell)})$ is feasible. We show something stronger. For $\ell=0,\ldots,t-1$ let
\begin{align*}
\LP^{(\ell)}&= \{z\in P\; : \; z\leq x^{(\ell)} \mbox{ and } z_j=x_j^{(\ell)} \mbox{ for } j\in [\ell]\}, \text{ and}\\
\IP^{(\ell)}&= \{z\in \LP^{(\ell)}\; : \; z\in \{0,1\}^n\}.
\end{align*}
If $\LP^{(\ell)}$ is a non-empty set then $\HelperLP(x^{(\ell)})$ is feasible. We show by induction on $\ell$ that $\LP^{(\ell)}$ and $\IP^{(\ell)}$ are not empty sets for $\ell=0,\ldots,t-1$. First, $\LP^{(0)}$ is feasible since by definition $x^{(0)}\in \dom(P)$, meaning there exists $z\in P$ such that $z\leq x^{(0)}$. By Theorem \ref{CV2}, there exists $\tilde{z}^i\in S$ and $\theta_i\geq 0$ for $i\in [k]$ such that $\sum_{i=1}^{k} \theta_i = 1$ and $\sum_{i=1}^{k}\theta_i \tilde{z}^i \leq gz$. (This assumes a finite integrality gap $g$).
Hence, $\sum_{i=1}^{k}\theta_i \tilde{z}^i \leq gz\leq gx^{(0)}$. So if $x^{(0)}_j=0$, then $ \sum_{i=1}^{k}\theta_i \tilde{z}_j^i =0$, which implies that $\tilde{z}^i_j=0$ for all $i\in [k]$ and $j\in [n]$ where $x^{(0)}_j=0$. Hence, $\tilde{z}^i\leq x^{(0)}$ for $i\in [k]$. Therefore $\tilde{z}^i\in \IP^{(0)}$ for $i\in [k]$, which implies $\IP^{(0)}\neq \emptyset$.

Now assume $\IP^{(\ell)}$ is non-empty for some $\ell \in [t-2]$. Since $\IP^{(\ell)}\subseteq\LP^{(\ell)}$ we have $\LP^{(\ell)}\neq \emptyset$ and hence the $\HelperLP(x^{(\ell)})$ has an optimal solution $z^*$.

We consider two cases. In the first case, we have $z^*_{\ell+1}=0$. In this case we set $x^{(\ell+1)}_{\ell+1}=0$. Since $z^*\leq x^{(\ell+1)}$, we have $z^*\in \LP^{(\ell+1)}$. Also, $z^*\in P$. By Theorem \ref{CV2} there exists $\tilde{z}^i\in S$ and $\theta_i\geq 0$ for $i\in [k]$ such that $\sum_{i=1}^{k} \theta_i = 1$ and  $\sum_{i=1}^{k}\theta_i \tilde{z}^i \leq gz^*$. We have $\sum_{i=1}^{k}\theta_i \tilde{z}^i \leq gz^*\leq gx^{(\ell+1)}$.
So for $j\in [n]$ where $x^{(\ell+1)}_j=0$, we have $z^i_j=0$ for $i\in [k]$. This implies $\tilde{z}^i\leq x^{(\ell+1)}$ for $i=1,\ldots,k$. Hence, there exists $z\in S$ such that $z\leq x^{(\ell+1)}$. We claim that $z\in \IP^{(\ell+1)}$. If $z\notin \IP^{(\ell+1)}$ we must have $1\leq j \leq \ell$ such that $z_j < x^{(\ell+1)}_{j}$, and thus $z_j = 0$ and $x^{(\ell+1)}_j=1$. Without loss of generality assume $j$ is the minimum index (between $1$ and $\ell$) satisfying $z_j < x^{(\ell+1)}_{j}$. Consider iteration $j$ of the DomToIP algorithm. We have $z\leq x^{(\ell+1)}\leq x^{(j)}$.
We also have $x^{(j)}_j=1$ which implies when we solved $\HelperLP(x^{(j-1)})$ the optimal value was strictly larger than zero. However, $z$ is a feasible solution to $\HelperLP(x^{(j-1)})$ and gives an objective value of 0. This is a contradiction, so $z\in \IP^{(\ell+1)}$.

Now for the second case, assume $z^*_{\ell+1} > 0$. We set $x^{(\ell+1)}_{\ell+1}=1$. For each point $z\in \LP^{(\ell)}$ we have $z_{\ell+1} >0$, so for each $z\in \IP^{(\ell)}$ we have $z_{\ell+1}>0$, i.e. $z_{\ell+1}=1$. This means that $z\in \IP^{(\ell+1)}$, and $\IP^{(\ell+1)} \neq \emptyset$.

Now consider $x^{(t)}$. Let $z$ be the optimal solution to $\LP^{(t-1)}$. If $x^{(t)}_t = 0$, we have $x^{(t)} = z$, which implies that $x^{(t)}\in P$, and since $x^{(t)}\in \{0,1\}^n$ we have $x^{(t)}\in S$. If $x^{(t)}_t =1$, it must be the case that $z_t > 0$. By the argument above there is a point $z'\in \IP^{(t-1)}$. We show that $x^{(t)} = z'$. For $j\in [t-1]$ we have $z'_j= x_j^{(t-1)}=x_j^{(t)}$. We just need to show that $z'_t = 1$. Assume $z'_t	 = 0$ for contradiction. Then $z'\in \LP^{(t-1)}$ has objective value $0$ for $\HelperLP(x^{(t-1)})$. This is a contradiction to $z$ being the optimal solution to $\LP^{(t-1)}$. This concludes the proof of Lemma \ref{domlemma}.

\section{FDT on Binary IPs}
\label{sec:binaryfdt}
Recall that $x^*$ was the optimal solution to minimizing a cost function $cx$ over set $P$, which provides a lower bound on $\min_{(x,y)\in S(I)} cx$.  In this section, we prove Theorem \ref{binaryFDT} by describing the Fractional Decomposition Tree (FDT) algorithm. We also remark that if $g(I)=1$, then the algorithm gives an exact decomposition of any feasible solution.

The FDT algorithm grows a tree similar to the classic branch-and-bound search tree for integer programs. Each node represents a partially integral vector $\bar{x}$ in $\dom(P)$ together with a multiplier $\bar{\lambda}$. The solutions contained in the nodes of the tree become progressively more integral at each level. In each level of the tree, the algorithm maintain a conic combination of points with the properties mentioned above. Leaves of the FDT tree contain integral solutions
that dominate a point in $P$. In Lemma~\ref{domlemma} we saw how to turn these into points in $S$. 

\paragraph{Branching on a node.}
We begin with the following lemmas that show how the FDT algorithm branches on a variable.
\begin{lemma}\label{LPClemma}
	Given $x'\in \dom(P)$ and $\ell\in [n]$ where $x'_{\ell}<1$, in polynomial time we can find vectors $\hat{x}^0,\hat{x}^1$ and scalars $\gamma_0,\gamma_1 \in [0,1]$ such that: (i) $\gamma_0 + \gamma_1  \geq 1/g$, (ii) $\hat{x}^0$ and $\hat{x}^1$ are in  $ P$, (iii) $\hat{x}^0_\ell=0$ and $\hat{x}^1_\ell=1$, and (iv) $\gamma_0 \hat{x}^0 + \gamma_1\hat{x}^1 \leq x'$.
\end{lemma}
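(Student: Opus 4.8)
The plan is to realize the branching step as a single linear program whose size is polynomial in that of $P$, to exhibit by hand a feasible point of value $1/g$ using the Carr--Vempala characterization (Theorem~\ref{CV2}), and then to read $\hat x^0,\hat x^1$ off an optimal solution.

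First I would homogenize away the products $\gamma_i\hat x^i$. Writing the description of $P$ as $Ax\ge b$ together with the explicit bounds $\{0\}^n\le x\le\{1\}^n$, introduce variables $y^0,y^1\in\mathbb{R}^n$ and $\gamma_0,\gamma_1$ meant to stand for $\gamma_0\hat x^0$ and $\gamma_1\hat x^1$, and consider
\begin{align*}
\max \quad & \gamma_0+\gamma_1\\
\text{s.t.} \quad & y^0+y^1\le x'\\
& Ay^0\ge \gamma_0 b,\qquad \{0\}^n\le y^0\le \gamma_0\{1\}^n,\qquad y^0_\ell=0\\
& Ay^1\ge \gamma_1 b,\qquad \{0\}^n\le y^1\le \gamma_1\{1\}^n,\qquad y^1_\ell=\gamma_1\\
& 0\le\gamma_0\le 1,\qquad 0\le\gamma_1\le 1.
\end{align*}
When $\gamma_j>0$ the constraints on $(y^j,\gamma_j)$ say precisely that $\hat x^j:=y^j/\gamma_j$ lies in $P$, while $y^0_\ell=0$ forces $\hat x^0_\ell=0$ and $y^1_\ell=\gamma_1$ forces $\hat x^1_\ell=1$; and $y^0+y^1\le x'$ is the inequality $\gamma_0\hat x^0+\gamma_1\hat x^1\le x'$. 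So from any feasible point of this LP with $\gamma_0+\gamma_1\ge 1/g$ we obtain (i)--(iv). The LP has $2n+2$ variables and $O(m+n)$ constraints, is bounded, hence has an optimal vertex computable in polynomial time.

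The one substantive step is to show the optimum is at least $1/g$, and this is where Theorem~\ref{CV2} enters. Since $x'\in\dom(P)$, fix $z\in P$ with $z\le x'$; apply Theorem~\ref{CV2} with $C=g$ to $z$ to get $\theta\in[0,1]^k$ with $\sum_i\theta_i=1$ and $\tilde x^i\in\dom(S)$ with $\sum_i\theta_i\tilde x^i\le gz$. Choosing $s^i\in S$ with $s^i\le\tilde x^i$ yields $\sum_i\theta_i s^i\le gz\le gx'$. Because $S\subseteq\{0,1\}^n$, each $s^i_\ell\in\{0,1\}$; let $I_0,I_1$ be the induced partition of $[k]$ and, for $j\in\{0,1\}$, set $\gamma_j=\tfrac1g\sum_{i\in I_j}\theta_i$ and $y^j=\tfrac1g\sum_{i\in I_j}\theta_i s^i$. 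Then $\gamma_0+\gamma_1=\tfrac1g\sum_i\theta_i=1/g$ and (since $g\ge1$) each $\gamma_j\le1$; $y^0+y^1=\tfrac1g\sum_i\theta_i s^i\le x'$; $y^0_\ell=0$ and $y^1_\ell=\tfrac1g\sum_{i\in I_1}\theta_i=\gamma_1$; $Ay^j=\tfrac1g\sum_{i\in I_j}\theta_i As^i\ge\tfrac1g\sum_{i\in I_j}\theta_i b=\gamma_j b$ as $s^i\in S$; and $\{0\}^n\le y^j\le\gamma_j\{1\}^n$ as $\{0\}^n\le s^i\le\{1\}^n$. Hence this is LP-feasible with objective $1/g$, so the optimum is $\ge 1/g$.

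Finally, take an optimal LP solution $(\gamma_0^{*},\gamma_1^{*},y^{0*},y^{1*})$; then $\gamma_0^{*}+\gamma_1^{*}\ge1/g>0$, so at least one $\gamma_j^{*}>0$, and for each such $j$ we set $\hat x^j:=y^{j*}/\gamma_j^{*}$, whereupon (i)--(iv) follow from the constraints as noted. If $\gamma_j^{*}=0$ for one $j$ --- which can occur, e.g.\ when every $s\in S$ has $s_\ell=1$ and $P$ has no point with $x_\ell=0$ --- then $\hat x^j$ appears in (iv) only multiplied by $0$, and we take it to be any point of $P$ with the prescribed $\ell$-coordinate, or, failing that, an arbitrary vector; this has no bearing on how the lemma is used within FDT. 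I expect the feasibility construction of the third paragraph to carry essentially all the content --- namely the observation that a Carr--Vempala decomposition of a point dominated by $x'$ can be split according to the value of coordinate $\ell$ and then uniformly rescaled by $1/g$ --- with the homogenization bookkeeping and the $\gamma_j=0$ corner case being the only fussy points.
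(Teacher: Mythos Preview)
Your proof is correct and follows essentially the same route as the paper: set up a homogenized LP in variables $(y^0,y^1,\gamma_0,\gamma_1)$ encoding the cones over $P$ with the $\ell$-coordinate pinned, show its optimum is at least $1/g$ by exhibiting a feasible point obtained from a Carr--Vempala decomposition split according to the value of coordinate~$\ell$ and rescaled by $1/g$, and read off $\hat x^j=y^{j*}/\gamma_j^{*}$. Your version is in fact slightly more careful than the paper's in a few bookkeeping details (passing from $x'\in\dom(P)$ to some $z\in P$ before invoking Theorem~\ref{CV2}, passing from $\dom(S)$ to $S$, adding the explicit bounds $\gamma_j\le 1$, and treating the $\gamma_j^{*}=0$ corner case), but the substance is identical.
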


\begin{proof}	
	Consider the following linear program which we denote by $\LPC(\ell,x')$. The variables of $\LPC(\ell,x')$ are $\lambda_0$, $\lambda_1$, $x^0$ and $x^1$.
	\begin{align}
	\LPC(\ell,x')\quad\quad& \max\quad \;\lambda_0+\lambda_1\\
	&\;\text{s.t.} \quad Ax^j \geq b\lambda_j & \mbox{ for $j=0,1$} \label{feasibility}\\
	&\;{\color{white}{\text{s.t.}} }\quad 0 \leq x^j \leq \lambda_j &\mbox{ for $j=0,1$}\label{bound}\\
	&\;{\color{white}{\text{s.t.}} }\quad x^0_\ell = 0,\; x^1_\ell =\lambda_1\label{branchcoordinate}\\
	&\;{\color{white}{\text{s.t.}} }\quad x^0 + x^1 \leq x'\label{packing}\\
	&\;{\color{white}{\text{s.t.}} }\quad \lambda_0,\lambda_1 \geq 0
	\end{align}
	
	Let $y^0$, $y^1$, $\lambda^*_0$, and $\lambda^*_1$ be an optimal solution to the LP above. Set $\gamma_0=\lambda^*_0$ and $\gamma_1 = \lambda^*_1$.
\cindy{The sentence right before this is how I think you link the $\lambda$ in the LP to the $\gamma$ constants in the Lemma statement and the claim below, etc.}
Let $\hat{x}^0 = y^0/\lambda^*_0$, $\hat{x}^1=y^1/\lambda^*_1$. This choice satisfies  (ii), (iii), (iv). To show that (i) is also satisfied we prove the following claim.
	\arash{Now that I notice, we use $\lambda_0,\lambda_1,x^0,x^1$ to represnt a feasible solution to the system 13-18. Since the optimal solution is use only for the duration of one sentence above maybe we can change the notation for the optimal solution. For instance use $y^0,y^1, \lambda^*_0,\lambda^*_1$}
\cindy{I think I made the changes you suggested.  I also used the LP optima to set the $\hat{x}$.  Please check.}
	\begin{claim}\label{CVexists}
		We have $\gamma_0 + \gamma_1\geq 1/g$.
	\end{claim}
	\begin{cproof}
		We show that there is a feasible solution that achieves the objective value of $\frac{1}{g}$. By Theorem \ref{CV2} there exists $\theta \in [0,1]^k$, with $\sum_{i=1}^{k}\theta_i = 1$ and $\tilde{x}^i\in S$ for $i\in[k]$ such that 
		$\sum_{i=1}^{k}\theta_i \tilde{x}^i\leq gx'$. So
		
		\begin{equation}\label{splitting}
		x'\geq \sum_{i=1}^{k}\frac{\theta_i}{g} \tilde{x}^i
		={\sum_{i\in [k]: \tilde{x}^i_\ell =0}\frac{\theta_i}{g} \tilde{x}^i}+{\sum_{i\in [k]: \tilde{x}^i_\ell =1}\frac{\theta_i}{g} \tilde{x}^i}.
		\end{equation}
		For $j=0,1$, let $x^j = \sum_{i\in [k]:\tilde{x}^i_\ell=j} \frac{\theta_i}{g}\tilde{x}^i$. Also let $\lambda_0=\sum_{i\in [k]: \tilde{x}^i_\ell =0}\frac{\theta_i}{g}$ and $\lambda_1 = \sum_{i\in [k]: \tilde{x}^i_\ell =1}\frac{\theta_i}{g}$. Note that $\lambda_0+\lambda_1 =1/g$. Constraint (\ref{packing}) is satisfied by Inequality (\ref{splitting}). Also, for $j=0,1$ we have
		\begin{equation}
		Ax^j= \sum_{i\in[k], \tilde{x}^i_\ell = j} \frac{\theta_i}{g} A\tilde{x}^i  \geq b \sum_{i\in[k], \tilde{x}^i_\ell = j} \frac{\theta_i}{g} = b\lambda_j.
		\end{equation}
		Hence, Constraints (\ref{feasibility}) holds. Constraint (\ref{branchcoordinate}) also holds since $x^0_\ell$ is obviously $0$ and $x^1_\ell= \sum_{i\in [k]: \tilde{x}^i_\ell = 1}\frac{\theta_i}{g}= \lambda_1$. The rest of the constraints trivially hold. 
	\end{cproof}
	This concludes the proof of Lemma \ref{LPClemma}.	
\end{proof}

We now show if $x'$ in the statement of Lemma \ref{LPClemma} is partially integral, we can find solutions with more integral components.
\begin{lemma}\label{round-up}
	Given $x'\in \dom(P)$ where $x'_1,\ldots,x'_{\ell-1}\in \{0,1\}$ and $x'_{\ell}<1$ for some $\ell\geq 1$ we can find in polynomial time vectors $\hat{x}^0,\hat{x}^1$ and scalars $\gamma_0,\gamma_1 \in [0,1]$ such that: (i) ${ 1}/{g}\leq \gamma_0 + \gamma_1  \leq 1$, (ii) $\hat{x}^0$ and $\hat{x}^1$ are in  $\dom( P)$, (iii) $\hat{x}^0_\ell=0$ and $\hat{x}^1_\ell=1$, (iv) $ \gamma_0\hat{x}^0 +\gamma_1 \hat{x}^1 \leq
	x'$,(v) $\hat{x}^i_j\in \{0,1\}$ for $i=0,1$ and $j\in[\ell-1]$.
\end{lemma}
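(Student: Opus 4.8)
The plan is to reduce to Lemma~\ref{LPClemma} and then post-process its output in two cheap ways: rescale the multipliers so their sum drops to at most $1$, and round the coordinates $1,\dots,\ell-1$ of the two vectors up to their (integral) values in $x'$. Since $x'_\ell<1$, Lemma~\ref{LPClemma} applies to $(\ell,x')$ and produces $\bar x^0,\bar x^1\in P$ and $\gamma_0,\gamma_1\in[0,1]$ with $\gamma_0+\gamma_1\ge 1/g$, $\bar x^0_\ell=0$, $\bar x^1_\ell=1$, and $\gamma_0\bar x^0+\gamma_1\bar x^1\le x'$. This already gives (ii)--(iv) and the lower bound in~(i); only the upper bound $\gamma_0+\gamma_1\le 1$ in~(i) and the new property~(v) remain.

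First I would handle the upper bound: if $\gamma_0+\gamma_1>1$, replace $(\gamma_0,\gamma_1)$ by $(\gamma_0,\gamma_1)/(\gamma_0+\gamma_1)$. Because $P\subseteq[0,1]^n$ forces $\dom(P)\subseteq\mathbb{R}_{\ge 0}^n$, we have $x'\ge 0$, so scaling the multipliers down only shrinks the left-hand side of~(iv), keeping it valid; and since $g\ge 1$ the sum stays in $[1/g,1]$. The vectors $\bar x^0,\bar x^1$ are untouched, so (ii) and (iii) are unaffected. Henceforth assume $\gamma_0+\gamma_1\le 1$.

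Next, for property~(v), define $\hat x^i$ ($i=0,1$) by $\hat x^i_j:=x'_j$ for $j\in[\ell-1]$ and $\hat x^i_j:=\bar x^i_j$ otherwise. Then $\hat x^i_j=x'_j\in\{0,1\}$ for $j\in[\ell-1]$, giving~(v), and coordinate $\ell$ is unchanged, so~(iii) still holds. The crux is that $\hat x^i\ge\bar x^i$ coordinatewise: for $j\in[\ell-1]$ with $x'_j=1$, $\hat x^i_j=1\ge\bar x^i_j$ since $\bar x^i\in P\subseteq[0,1]^n$; for $j\in[\ell-1]$ with $x'_j=0$, property~(iv) at coordinate $j$ gives $\gamma_0\bar x^0_j+\gamma_1\bar x^1_j\le 0$, so $\bar x^i_j=0=\hat x^i_j$ (when $\gamma_i>0$); and the other coordinates agree. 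As $\bar x^i\in P\subseteq\dom(P)$ and $\dom(P)$ is upward closed, $\hat x^i\in\dom(P)$, proving~(ii). Finally, (iv) holds coordinatewise: on $j\notin[\ell-1]$ it is inherited from the rescaled output of Lemma~\ref{LPClemma}; on $j\in[\ell-1]$ with $x'_j=0$ both sides are $0$; and on $j\in[\ell-1]$ with $x'_j=1$ the left side equals $\gamma_0+\gamma_1\le 1=x'_j$ --- precisely where the upper bound on $\gamma_0+\gamma_1$ from the rescaling step is needed.

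I expect the only genuine subtlety to be the degenerate case where $\gamma_0$ or $\gamma_1$ is $0$, in which the corresponding $\bar x^i$ coming from Lemma~\ref{LPClemma} (as $y^i/\lambda^*_i$) is not well-defined and the step ``$\bar x^i_j=0$ when $x'_j=0$'' is vacuous. I would dispose of this by, say when $\gamma_1=0$, taking $\hat x^1$ to be an arbitrary point of $P$ rounded up so that $\hat x^1_\ell=1$ and $\hat x^1_j=x'_j$ for $j\in[\ell-1]$; this lies in $\dom(P)$, contributes nothing to~(iv), and leaves all other properties intact. Everything else is bookkeeping.
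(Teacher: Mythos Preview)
Your proof is correct and follows essentially the same route as the paper: apply Lemma~\ref{LPClemma}, then round the first $\ell-1$ coordinates up to integers and verify (iv) using $\gamma_0+\gamma_1\le 1$. The only cosmetic difference is that the paper sets $\hat x^i_j:=\lceil \bar x^i_j\rceil$ for $j\in[\ell-1]$ rather than $\hat x^i_j:=x'_j$; these can disagree only when $x'_j=1$ and $\bar x^i_j=0$, and either choice works. You are actually more careful than the paper on two points: you make the rescaling step ensuring $\gamma_0+\gamma_1\le 1$ explicit (the paper asserts this bound comes ``from Lemma~\ref{LPClemma}'' without justification), and you address the degenerate case $\gamma_i=0$, which the paper ignores.
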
 
\begin{proof}
	By Lemma \ref{LPClemma} we can find $\bar{x}^0$, $\bar{x}^1$, $\gamma_0$ and $\gamma_1$ that satisfy (i), (ii), (iii), and (iv). We define $\hat{x}^0$ and $\hat{x}^1$ as follows. For $i=0,1$, for $j\in[\ell-1]$, let $\hat{x}^i_j= \ceil{\bar{x}^i_j}$, for $j=\ell,\ldots,t$ let $\hat{x}^i_j = \bar{x}^i_j$.

	We now show that $\hat{x}^0$, $\hat{x}^1$, $\gamma_0$, and $\gamma_1$ satisfy all the conditions. Note that conditions (i), (ii), (iii), and (v) are trivially satisfied. Thus we only need to show (iv) holds. We need to show that $\gamma_0 \hat{x}^0_j+\gamma_1\hat{x}^1_j\leq gx'_j$. If $j=\ell,\ldots,t$, then this clearly holds. Hence, assume $j\leq \ell-1$. By the property of $x'$ we have $x'_j\in \{0,1\}$. If $x'_j= 0$, then by Constraint (\ref{packing}) we have $\bar{x}^0_j = \bar{x}^1_j=0$. Therefore, $\hat{x}^i_j=0$ for $i=0,1$, so (iv) holds. Otherwise if $x'_j = 1$, then we have
	$\gamma_0\hat{x}^0_j+\gamma_1\hat{x}^1_j\leq \gamma_0+\gamma_1\leq 1\leq x'_j.$ 
	Therefore (v) holds.
\end{proof}

\paragraph{Growing and Pruning FDT tree.} The FDT algorithm maintains nodes $L_i$ in iteration $i$ of the algorithm. The nodes in $L_i$ correspond to the nodes in level $L_i$ of the FDT tree. The points in the leaves of the FDT tree, $L_t$, are points in $\dom(P)$ and are integral for all integer variables.

\begin{lemma}\label{prune}
	There is a polynomial time algorithm that produces sets $L_0,\ldots,L_t$ of pairs of $x\in \dom(P)$ together with multipliers $\lambda$ with the following properties for $i=0,\ldots,t$:
	(a) If $x\in L_i$, then $x_j \in \{0,1\}$ for $j\in [i]$, i.e. the first $i$ coordinates of a solution in level $i$ are integral, (b) $\sum_{[x,\lambda]\in L_i} \lambda\geq\frac{1}{g^i}$, (c) $\sum_{[x,\lambda]\in L_i}\lambda x \leq x^*$, (d) $|L_i|\leq t$.
\end{lemma}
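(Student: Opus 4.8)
The plan is to prove Lemma~\ref{prune} by induction on $i$, where each step of the construction first \emph{branches} on coordinate $i+1$ using Lemma~\ref{round-up} and then \emph{prunes} the resulting collection back down to at most $t$ pairs via a single linear program. For the base case take $L_0=\{[x^*,1]\}$: property~(a) is vacuous, (b) reads $1\ge 1/g^0$, (c) reads $x^*\le x^*$, and (d) reads $1\le t$ (we may assume $t\ge 1$, since $t=0$ forces $x^*=0$ and the statement is trivial). Throughout, every point that appears lies in $\dom(P)$, has support contained in $[t]$, and has all coordinates in $[0,1]$; these facts propagate because $x^*$ has them and Lemma~\ref{round-up} preserves them (its outputs lie in $\dom(P)$, are obtained by rounding up coordinates of points of $P\subseteq[0,1]^n$, and by property~(iv) are dominated coordinatewise by the parent, so they inherit having support in $[t]$).

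\textbf{Branching.} Assume $L_i$ satisfies (a)--(d). Build an intermediate collection $L'$: for each $[x,\lambda]\in L_i$, if $x_{i+1}\in\{0,1\}$ put $[x,\lambda]$ into $L'$ unchanged; otherwise $x_{i+1}<1$ and, since $x_1,\dots,x_i\in\{0,1\}$ by~(a), Lemma~\ref{round-up} with $\ell=i+1$ yields $\hat x^0,\hat x^1\in\dom(P)$ and $\gamma_0,\gamma_1\in[0,1]$ with $\gamma_0+\gamma_1\ge 1/g$, $\hat x^0_{i+1}=0$, $\hat x^1_{i+1}=1$, $\gamma_0\hat x^0+\gamma_1\hat x^1\le x$, and $\hat x^0,\hat x^1$ integral in coordinates $1,\dots,i$; put $[\hat x^0,\lambda\gamma_0]$ and $[\hat x^1,\lambda\gamma_1]$ into $L'$, discarding any pair with multiplier $0$. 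Then $L'$ satisfies (a) (by property~(v) together with $\hat x^j_{i+1}\in\{0,1\}$, resp.\ by the case hypothesis on carried-over points), it satisfies (c) (summing $\lambda\gamma_0\hat x^0+\lambda\gamma_1\hat x^1\le\lambda x$ over branched nodes and $\lambda x\le\lambda x$ over carried nodes gives $\sum_{L'}\lambda x\le\sum_{L_i}\lambda x\le x^*$), and it satisfies $\sum_{[x,\lambda]\in L'}\lambda\ge\tfrac1g\sum_{[x,\lambda]\in L_i}\lambda\ge 1/g^{i+1}$, which is the part of~(b) we need (using $\gamma_0+\gamma_1\ge 1/g$ and $1\ge 1/g$). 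Only~(d) may fail, since $|L'|\le 2|L_i|\le 2t$.

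\textbf{Pruning.} Let $x^{(1)},\dots,x^{(r)}$ be the points of $L'$ with current multipliers $\bar\mu_1,\dots,\bar\mu_r$, and consider the linear program $\max\{\sum_{j=1}^r\mu_j:\ \sum_{j=1}^r\mu_j x^{(j)}\le x^*,\ \mu\ge 0\}$ in $\mu\in\mathbb{R}^r$. It is feasible ($\mu=\bar\mu$, with objective $\ge 1/g^{i+1}$) and bounded: each $x^{(j)}$ is nonzero with coordinates in $[0,1]$, so for any $k$ with $x^{(j)}_k>0$ we get $\mu_j x^{(j)}_k\le\sum_{j'}\mu_{j'}x^{(j')}_k\le x^*_k\le 1$. (If $\mathbf 0\in\dom(P)$, then $\mathbf 0\in S$ by Lemma~\ref{domlemma} and taking $L_i=\{[\mathbf 0,1]\}$ for every $i$ works; so we may assume no $x^{(j)}$ is the zero vector.) Take a vertex-optimal solution $\mu^*$. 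Written in $\mathbb{R}^r$, the feasible region is cut out by the $t$ inequalities $\sum_j\mu_j x^{(j)}_k\le x^*_k$ for $k\in[t]$ (all supports lie in $[t]$) together with the $r$ inequalities $\mu_j\ge 0$; at a vertex, $r$ of these are linearly independent and tight, and since at most $t$ can come from the first group, at least $r-t$ of the $\mu_j$ are forced to $0$. Hence $\mu^*$ has at most $t$ nonzero entries; set $L_{i+1}=\{[x^{(j)},\mu^*_j]:\mu^*_j>0\}$. Property~(a) is inherited from $L'$, (c) holds by feasibility of $\mu^*$, (b) holds because the optimal value $\sum_j\mu^*_j$ is at least the value $\ge 1/g^{i+1}$ of the feasible point $\bar\mu$, and (d) holds by construction.

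\textbf{Running time and the main obstacle.} There are $t\le n$ iterations, each doing at most $|L_i|\le t$ calls to the polynomial-time routine of Lemma~\ref{round-up} plus one LP solve, so the whole procedure runs in polynomial time (all numbers stay polynomially bounded by the standard bounds on the encoding length of LP vertices). I expect the pruning step to be the crux: branching alone lets $|L_i|$ double each level, so one must collapse a long nonnegative combination dominated by $x^*$ down to support at most $t$ \emph{without} losing more total weight than the factor $g$ we are already paying. The key observation is that maximizing total weight over all nonnegative combinations of given vectors dominated by $x^*$ is a linear program whose basic optima are supported on at most $t=|\spp(x^*)|$ coordinates while still attaining at least the incumbent weight — which is exactly the bookkeeping that keeps~(b) and~(d) true at the same time.
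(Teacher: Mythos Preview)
Your proof is correct and follows essentially the same approach as the paper: induct on $i$, branch each node via Lemma~\ref{round-up}, then prune by maximizing $\sum\mu_j$ over $\{\mu\ge 0:\sum_j\mu_j x^{(j)}\le x^*\}$ and taking a basic optimal solution, which has at most $t$ nonzero entries. Your write-up is actually a bit more careful than the paper's about edge cases (boundedness of the pruning LP, the case $x_{i+1}=0$, the zero-vector degeneracy), though your parenthetical remark that the children are ``dominated coordinatewise by the parent'' overstates property~(iv) slightly --- only the \emph{weighted sum} is dominated --- but the conclusion you draw about supports lying in $[t]$ is still correct once you restrict to children with positive multiplier.
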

\begin{proof}
	We prove this lemma using induction but one can clearly see how to turn this proof into a polynomial time algorithm. Let $L_0$ be the set that contains a single node (\textit{root of the FDT tree}) with $x^*$ and multiplier 1. All the requirements in the lemma are satisfied for this choice.
	
	Suppose by induction that we have constructed sets $L_0,\ldots,L_i$. Let the solutions in $L_i$ be $x^j$ for $j\in [k]$ and $\lambda_j$ be their multipliers, respectively. For each $j\in[k]$ if $x^j_{i+1}=1$ we add the pair $(x^j,\lambda_j)$ to $L'$. Otherwise,	applying Lemma \ref{round-up} (setting $x'= x^j$ and $\ell = i+1$) we can find $x^{j0}$, $x^{j1}$, $\lambda^0_j$ and $\lambda^1_j$ with the properties (i) to (v) in Lemma \ref{round-up}. Add the pairs  $(x^{j0} ,\lambda_j\lambda^0_j)$ and  $(x^{j1} ,\lambda_j\lambda^1_j)$ to $L'$. It is easy to check that set $L'$ is a suitable candidate for $L_{i+1}$, i.e. set $L'$ satisfies (a), (b) and (c). However we can only ensure that $|L'|\leq 2k\leq 2t$, and might have $|L'|>t$. We call the following linear program $\prun(L')$. Let $L' = \{[x^1,\gamma_1],\ldots,[x^{|L'|},\gamma_{|L'|}]\}$. The variables of $\prun(L')$ are scalar variables $\theta_j$ for each node $j$ in $L'$.  
		\begin{equation}
		\prun(L')\quad\quad\quad \{\max \sum_{j=1}^{|L'|} \theta_j\;:\; \sum_{j=1}^{|L'|} \theta_j x^j_i\leq x^*_i \mbox{ for $i\in [t]$},\; \theta\geq 0 \}
		\end{equation}
		
		Setting $\theta = \gamma$ gives a feasible solution to $\prun(L')$. Let $\theta^*$ be the optimal vertex solution to this LP. Since the problem is in $\mathbb{R}^{|L'|}$,  $\theta^*$ has to satisfy $|L'|$ linearly independent constraints at equality. However, there are only $t$ constraints of type $ \sum_{j=1}^{|L'|} \theta_j x^j_i\leq x^*_i$. Therefore, there are at most $t$ coordinates of $\theta^*_j$ that are non-zero. Set $L_{i+1}$ which consists of $x^j$ for $j=1,\ldots,|L'|$ and their corresponding multipliers $\theta^*_j$ satisfy the properties in the statement of the lemma. We can discard the nodes in $L_{i+1}$ that have $\theta^*_j=0$, so $|L_{i+1}| \leq t$. Also, since $\theta^*$ is optimal and $\gamma$ is feasible for $\prun(L')$, we have $\sum_{j=1}^{|L'|} \theta^*_j \geq \sum_{j=1}^{|L'|}\gamma_j \geq \frac{1}{g^{i+1}}$. \end{proof}
	
	\paragraph{From leaves of FDT to feasible solutions.}
	For the leaves of the FDT tree,  $L_t$, we have that every solution $x$ in $L_t$ has $x\in\{0,1\}^n$ and $x\in \dom(P)$. By applying Lemma \ref{domlemma} we can obtain a point $x'\in S$ such that $x'\leq x$. This concludes the description of the FDT algorithm and proves Theorem \ref{binaryFDT}. See Algorithm \ref{FDTFull} for a summary of the FDT algorithm.
	
	\vspace*{8pt}

	\begin{algorithm}[H]\label{FDTFull}
		\KwIn{$P= \{x\in \mathbb{R}^{n}: Ax\geq b\}$ and $S=\{x\in P: x\in \{0,1\}^n\}$ such that $g=\max_{c\in \mathbb{R}^n_+ }\frac{\min_{x\in S}cx}{\min_{x\in P}cx}$ is finite, $x^*\in P$}
		\KwOut{$z^i\in S$ and $\lambda_i\geq 0$ for $i\in[k]$ such that $\sum_{i=1}^{k}\lambda_i = 1$, and $\sum_{i=1}^{k}\lambda_iz^i\leq g^tx^*$ }
		$L^0\leftarrow [x^*,1]$\\
		\For{$i=1$ \textbf{to} $t$}{
			$L'\leftarrow \emptyset$\\
			\For{$[x,\lambda] \in L^i$}{
				Apply Lemma \ref{round-up} to obtain $[\hat{x}^0,\gamma_0]$ and $[\hat{x}^1,\gamma_1]$\\
				$L' \leftarrow L' \cup \{[\hat{x}^0,\lambda\cdot\gamma_0]\} \cup \{[\hat{x}^1,\lambda\cdot\gamma_1]\}$\\			
			}
			Apply Lemma \ref{prune} to prune $L'$ to obtain $L^{i+1}$. 
		}
		\For{$[x,\lambda] \in L^t$}{
			Apply Algorithm \ref{domtoIPalg} to $x$ to obtain $z\in S$\\
			$F \leftarrow F \cup \{[z,\lambda]\}$
		}
		\textbf{return} $F$
		\caption{Fractional Decomposition Tree Algorithm}
	\end{algorithm}

\vspace{8pt}
There are $O(n^2)$ nodes in the FDT tree. A faster way to achieve feasible solutions with good quality for an IP with bounded integrality gap is an algorithm that takes a random dive into the FDT tree, hence only visiting $O(n)$ nodes.

	\vspace*{8pt}
	
	\begin{algorithm}[H]\label{FDT-dive}
		\KwIn{$P= \{x\in \mathbb{R}^{n}: Ax\geq b\}$ and $S=\{x\in P: x\in \{0,1\}^n\}$ such that $g=\max_{c\in \mathbb{R}^n_+ }\frac{\min_{x\in S}cx}{\min_{x\in P}cx}$ is finite, $x^*\in P$}
		\KwOut{$z \in S$}
		$y = x^*$\\
		\For{$i=1$ \textbf{to} $t$}{
			
			Apply Lemma \ref{round-up} to obtain $[\hat{x}^0,\gamma_0]$ and $[\hat{x}^1,\gamma_1]$\\
			$i \sim\text{Bernoulli}(\frac{\gamma_0}{\gamma_0+\gamma_1})$\\
			$y\rightarrow \hat{x}^i$
			
		}
		Apply Algorithm \ref{domtoIPalg} to $y$ to obtain $z\in S$\\
		\textbf{return} $z$
		\caption{Dive FDT Algorithm}
	\end{algorithm}

\section{FDT for 2EC}\label{sec:2EC}

In Section~\ref{sec:binaryfdt} our focus was on binary IPs. In this section, in an attempt to extend FDT to \{0,1,2\} problems we introduce an FDT algorithm for a 2-edge-connected multi-subgraph problem. Given a graph $G=(V,E)$ a multi-subset of edges $F$ of $G$ is a 2-edge-connected multi-subgraph of $G$ if for each set $\emptyset\subset U \subset V$, the number of edges in $F$ that have one endpoint in $U$ and one not in $U$ is at least 2. In 2EC, we are given non-negative costs on the edges of $G$ and the goal is to find the minimum cost 2-edge-connected multi-subgraph of $G$. We want to prove Theorem \ref{FDT2EC}.
\FDTEC*
We do not know the exact value for $g(\2ec)$, but we know $\frac{6}{5} \leq g(\2ec) \leq \frac{3}{2}$ \cite{alexander2006integrality,wolsey}. The FDT algorithm for 2EC is very similar to the one for binary IPs, but there are some differences as well. A natural thing to do is to have three branches for each node of the FDT tree, however, the branches that are equivalent to setting a variable to $1$, might need further decomposition. That is the main difficulty when dealing with $\{0,1,2\}$-IPs.

First, we need a branching lemma. Observe that  the following branching lemma is essentially a translation of Lemma \ref{LPClemma} for $\{0,1,2\}$ problems except for one additional clause. 

\begin{restatable}{lemma}{2ECLPC}
	\label{LPC2EC}
	Given $x\in \subtour(G)$, and $e\in E$ we can find in polynomial time vectors $x^0,x^1$ and $x^2$ and scalars $\gamma_0,\gamma_1$, and $\gamma_2$ such that: (i) $\gamma_0 + \gamma_1 +\gamma_2 \geq { 1}/{g(\2ec)}$, (ii) $x^0,x^1,$ and $x^2$ are in  $ \subtour(G)$, (iii) $x^0_e=0$, $x^1_e=1$, and $x^2_e=2$, (iv) $\gamma_0 x^0 + \gamma_1{x}^1  + \gamma_2x^2\leq {x}$, (v) for $f\in E$ with ${x}_f\geq 1$, we have $x^j_f\geq 1$ for $j=0,1,2$.
\end{restatable}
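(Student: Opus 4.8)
The plan is to mimic the proof of Lemma~\ref{LPClemma} almost verbatim, writing down an analogous LP with three branches instead of two and exhibiting a feasible solution of value $1/g(\2ec)$ using the Carr--Vempala decomposition (Theorem~\ref{CV2}). Concretely, I would introduce an LP $\LPC_{\2ec}(e,x)$ whose variables are scalars $\lambda_0,\lambda_1,\lambda_2 \ge 0$ and vectors $x^0,x^1,x^2 \in \mathbb{R}^E$, with objective $\max \lambda_0+\lambda_1+\lambda_2$. The constraints would be: the scaled subtour constraints $\sum_{f\in\delta(U)} x^j_f \ge 2\lambda_j$ for every $\emptyset\subsetneq U\subsetneq V$ and $j\in\{0,1,2\}$; the box constraints $0\le x^j\le 2\lambda_j$; the branching constraints $x^0_e = 0$, $x^1_e = \lambda_1$, $x^2_e = 2\lambda_2$; the packing constraint $x^0+x^1+x^2 \le x$; and, to capture clause~(v), for each $f\in E$ with $x_f\ge 1$ the constraint $x^j_f \ge \lambda_j$ for $j=0,1,2$. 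One solves this LP, obtains an optimal solution $(y^0,y^1,y^2,\lambda_0^*,\lambda_1^*,\lambda_2^*)$, sets $\gamma_j = \lambda_j^*$ and $x^j = y^j/\lambda_j^*$ (handling the degenerate $\lambda_j^*=0$ case by taking $x^j$ to be any point of $\subtour(G)$ satisfying the extra conditions, which exists since $x$ itself lies in $\subtour(G)$ and has $x_f\ge 1$ on the relevant edges). Then (ii), (iii), (iv), (v) are immediate by construction and homogeneity, exactly as in Lemma~\ref{LPClemma}.

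The only substantive thing to prove is clause~(i), i.e.\ that the LP's optimum is at least $1/g(\2ec)$, and for that I would exhibit an explicit feasible point. Apply Theorem~\ref{CV2} to $x\in\subtour(G)\subseteq P(\2ec)$: there are $\tilde x^i$ in $\dom(\2ec(G))$ — WLOG actual incidence vectors of $2$-edge-connected multi-subgraphs, by pushing down to the convex hull — and weights $\theta_i$ summing to $1$ with $\sum_i \theta_i \tilde x^i \le g(\2ec)\,x$. Partition the index set by the value $\tilde x^i_e \in \{0,1,2\}$ into $I_0, I_1, I_2$, and set $\lambda_j := \sum_{i\in I_j} \theta_i/g(\2ec)$ and $x^j := \sum_{i\in I_j} (\theta_i/g(\2ec))\,\tilde x^i$. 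Then $\lambda_0+\lambda_1+\lambda_2 = 1/g(\2ec)$, the scaled subtour and box constraints hold since each $\tilde x^i$ satisfies them (a $2$-edge-connected multi-subgraph has $\sum_{f\in\delta(U)}\tilde x^i_f \ge 2$ and $\tilde x^i_f\le 2$), the branching constraints hold because $\tilde x^i_e = j$ for $i\in I_j$, and the packing constraint is exactly the Carr--Vempala inequality restricted componentwise. For clause~(v), note that if $x_f\ge 1$ then for $i\in I_j$ we have $\tilde x^i_f \ge 1$... wait — that is \emph{not} automatic, and this is the crux.

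The main obstacle is precisely clause~(v): it is not true that every $2$-edge-connected multi-subgraph $\tilde x^i$ in the Carr--Vempala decomposition of $x$ uses edge $f$ whenever $x_f\ge 1$. So the naive feasible point above need not satisfy the new constraint. The fix I would pursue is to first \emph{contract} or otherwise normalize: since $x_f\ge 1$ for the edges $f$ in question, one should work with a modified ground set where such edges are forced in. Specifically, I expect the right move is to define $E_{\ge 1} = \{f : x_f \ge 1\}$, observe (this is the standard structural fact about Carr--Vempala / fundamental extreme points, cf.\ the cited \cite{carrravi,boydcarr}) that one may assume the decomposition is built so that each $\tilde x^i$ restricted to $E_{\ge 1}$ is at least $1$ — e.g.\ by decomposing $x - \chi^{E_{\ge 1}}$ (which still lies in an appropriate polyhedron after contracting the $1$-edges) and adding $\chi^{E_{\ge 1}}$ back to every term. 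Then $\tilde x^i_f \ge 1$ on $E_{\ge 1}$ holds by construction, clause~(v) follows, and the rest goes through unchanged. I would double-check that contracting the edges with $x_f\ge 1$ preserves membership in $\subtour$ of the contracted graph and that the integrality gap bound $g(\2ec)$ is not worsened by contraction (it is not, since any $2$-edge-connected multi-subgraph of the contracted graph lifts to one of $G$ by adding back the forced edges). This contraction/forcing argument is the one place where genuine work beyond translating Lemma~\ref{LPClemma} is needed.
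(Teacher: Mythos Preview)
Your LP and the reduction to showing its optimum is at least $1/g(\2ec)$ match the paper exactly, and you are right that the entire difficulty is clause~(v). However, your proposed fix via contraction has a genuine gap. You assert that ``any $2$-edge-connected multi-subgraph of the contracted graph lifts to one of $G$ by adding back the forced edges,'' but this is false. The cleanest counterexample is a bridge $f$: any $x\in\subtour(G)$ has $x_f=2$, so $f\in E_{\ge 1}$; after contraction $f$ disappears, and adding back \emph{one} copy (which is what $\chi^{E_{\ge 1}}$ prescribes) leaves the cut $\delta(U)=\{f\}$ with only one crossing edge. More generally, contracting $E_{\ge 1}$ destroys precisely the cuts that separate endpoints of edges in $E_{\ge 1}$, and nothing in the decomposition on $G/E_{\ge 1}$ controls those cuts. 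Patching by adding $\lfloor x_f\rfloor$ copies instead of one does not repair this either, since for $1\le x_f<2$ you still add only one copy and the cut may be tight at $x$ with all its other mass coming from edges outside $E_{\ge 1}$ that the contracted multi-subgraph happens not to use.

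The paper's fix goes in the opposite direction: rather than contract, it \emph{subdivides}. Replace each $f$ with $x_f\ge 1$ by a path $P_f$ of length $L=\lceil 2/\epsilon\rceil$, and set $x'_h=x_f$ on every path edge; then $x'\in\subtour(G')$ and one applies Theorem~\ref{CV2} in $G'$. The point is that in any $2$-edge-connected multi-subgraph $F'_i$ of $G'$, at most one edge of $P_f$ can be absent (and then every other path edge must be doubled), so $\chi^{F'_i}$ restricted to $P_f$ has total at least $L-1$ with every entry in $\{0,1,2\}$. Projecting $F'_i$ back to an $F_i$ on $G$ by taking $\chi^{F_i}_f=2$ if some path edge was absent and $\chi^{F_i}_f=1$ otherwise guarantees $\chi^{F_i}_f\ge 1$ (this is exactly what (v) needs) and $\chi^{F_i}_f\le \frac{1}{L-1}\sum_{h\in P_f}\chi^{F'_i}_h$. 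The packing inequality then holds up to a factor $\tfrac{L}{L-1}$, giving a feasible solution of value $\tfrac{L-1}{L}\cdot\tfrac{1}{g(\2ec)}$; taking $L$ large enough beats any putative optimum $\tfrac{1}{g(\2ec)}-\epsilon$, a contradiction. So the key idea you are missing is that \emph{subdivision}, not contraction, is the operation that forces the decomposition to respect $x_f\ge 1$.
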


\begin{proof}
	Consider the following LP with variables $\lambda_j$ and $x^j$ for $j=0,1,2$. 
	\begin{align}
	\quad\quad& \max\quad \;\sum_{j=0,1,2}\lambda_j\\
	&\;\text{s.t.} \quad x^j(\delta(U))\geq 2\lambda_j \;& \mbox{ for $\emptyset \subset U \subset V$, and $j=0,1,2$} \label{feasibility2ec}\\
	&\;{\color{white}{\text{s.t.}} }\quad 0 \leq x^j \leq 2\lambda_j\; &\mbox{ for $j=0,1,2$}\label{bound2ec}\\
	&\;{\color{white}{\text{s.t.}} }\quad x^j_e =j\cdot \lambda_j\; &\mbox{ for $j=0,1,2$}\label{branchcoordinate2ec}\\
	&\;{\color{white}{\text{s.t.}} }\quad x^j_f \geq \lambda_j \; &\mbox{ for $f\in E$ where $x_f \geq 1$, and $j=0,1,2$}\label{1edges2ec}\\
	&\;{\color{white}{\text{s.t.}} }\quad x^0 + x^1+x^2 \leq x\label{packing2ec}\\
	&\;{\color{white}{\text{s.t.}} }\quad \lambda_0,\lambda_1,\lambda_2 \geq 0
	\end{align}	Let $x^j$, $\gamma_j$ for $j=0,1,2$ be an optimal solution solution to the LP above. Let $\hat{x}^{j}={x^j}/{\gamma_j}$ for $j=0,1,2$ where $\gamma_j>0$. If $\gamma_j=0$, let $\hat{x}^{j}=0$. Observe that  (ii), (iii), (iv), and (v) are satisfied with this choice. We can also show that $\gamma_0+\gamma_1+\gamma_2\geq {1}/{g(\2ec)}$, which means that (i) is also satisfied. The proof is similar to the proof of the claim in Lemma \ref{LPClemma}, but we need to replace each $f\in E$ with $x_f\geq 1$ with a suitably long path to ensure that Constraint (\ref{1edges2ec}) is also satisfied.	
	\begin{claim}\label{CVexists}
		We have $\gamma_0 + \gamma_1+\gamma_2\geq \frac{1}{g(\2ec)}$.
	\end{claim}
	\begin{cproof}
		Suppose for contradiction $\sum_{j=0,1,2}\gamma_j = \frac{1}{g(\2ec)} - \epsilon$ for some $\epsilon >0$. Construct graph $G'$ by removing edge $f$ with $x_f\geq 1$ and replacing it with a path $P_f$ of length $\ceil{\frac{2}{\epsilon}}$. Define $x'_h = x_h$ for each edge $h$ such that $x_h<1$. For each $h\in P_f$ let $x'_h= x_f$ for all $f$ with $x_f\geq 1$. It is easy to check that $x'\in \subtour(G')$. By Theorem \ref{CV2} there exists $\theta \in [0,1]^k$, with $\sum_{i=1}^{k}\theta_i = 1$ and 2-edge-connected multi-subgraphs $F'_i$ of $G'$ for $i=1,\ldots,k$ such that 
		$\sum_{i=1}^{k}\theta_i \chi^{F'_i}\leq g(\2ec)x'$. 
		
		Each $F'_i$ contains at least one copy of every edge in any path $P_f$, except for at most one edge in the path. We will obtain 2-edge-connected multi-subgraphs $F_1,\ldots,F_k$ of $G$ using $F'_1,\ldots,F'_k$, respectively. To obtain $F_i$ first remove all $P_f$ paths from $F'_i$. Suppose there is an edge $h$ in $P_f$ such that $\chi^{F'_i}_h=0$, this means that for any edge $p\in P_f$ such that $p\neq h$, $\chi^{F'_i}_p=2$. In this case, let $\chi^{F_i}_f=2$, i.e. add two copies of $f$ to $F_i$. If there are at least one edge $h\in P_f$ with $\chi^{F'_i}_h= 1$, let $\chi^{F_i}_f=1$, i.e. add one copy of $f$ to $F_i$. If for all edges $h\in P_f$, we have $\chi^{F'_i}_h=2$, then let $\chi^{F_i}_f=2$. For $f\in E$ with $x_f<1$ we have
		\begin{equation}
		\sum_{i=1}^{k}\theta_i \chi^{F_i}_f=\sum_{i=1}^{k}\theta_i \chi^{F'_i}_f\leq g(\2ec)x'_f= g(\2ec)x_f.
		\end{equation}
		In addition for $f\in E$ with $x_f\geq 1$ we have $\chi^{F_i}_f \leq \frac{\sum_{h\in P_f}\chi^{F'_i}_h}{\ceil{\frac{2}{\epsilon}}-1}$ by construction.
		\begin{align*}
		\sum_{i=1}^{k}\theta_i \chi^{F_i}_f&\leq \sum_{i=1}^{k}\theta_i\frac{\sum_{h\in P_f}\chi^{F'_i}_h}{\ceil{\frac{2}{\epsilon}}-1}\\
		&= \frac{\sum_{h\in P_f} \sum_{i=1}^{k}\theta_i\chi^{F'_i}_h}{\ceil{\frac{2}{\epsilon}}-1}\\
		&\leq \frac{\sum_{h\in P_f} g(\2ec)x'_h}{\ceil{\frac{2}{\epsilon}}-1}\\
		&= \frac{\sum_{h\in P_f} g(\2ec)x_f}{\ceil{\frac{2}{\epsilon}}-1}\\
		&= \frac{\ceil{\frac{2}{\epsilon}}}{\ceil{\frac{2}{\epsilon}}-1}g(\2ec)x_f.
		\end{align*}
		Therefore, since $\frac{\ceil{\frac{2}{\epsilon}}}{\ceil{\frac{2}{\epsilon}}-1}\geq 1$, we have 
		\begin{equation}
		x \geq\sum_{i\in [k]: \chi^{F_i}_e=0}\frac{\theta_i(\ceil{\frac{2}{\epsilon}}-1)}{g(\2ec)\ceil{\frac{2}{\epsilon}}}\chi^{F_i}+ \sum_{i\in [k]: \chi^{F_i}_e=1}\frac{\theta_i(\ceil{\frac{2}{\epsilon}}-1)}{g(\2ec)\ceil{\frac{2}{\epsilon}}}\chi^{F_i}+\sum_{i\in [k]: \chi^{F_i}_e=2}\frac{\theta_i(\ceil{\frac{2}{\epsilon}}-1)}{g(\2ec)\ceil{\frac{2}{\epsilon}}}\chi^{F_i}.
		\end{equation}
		Let $x^j = \sum_{i\in [k]: \chi^{F_i}_e=j}\frac{\theta_i(\ceil{\frac{2}{\epsilon}}-1)}{g(\2ec)\ceil{\frac{2}{\epsilon}}}\chi^{F_i}$ and $\theta_j =  \sum_{i\in [k]: \chi^{F_i}_e=j}\frac{\theta_i(\ceil{\frac{2}{\epsilon}}-1)}{g(\2ec)\ceil{\frac{2}{\epsilon}}}$ for $j=0,1,2$. It is easy to check that $x^j$ , $\theta_j$ for $j=0,1,2$ is a feasible solution to the LP above. We have $\sum_{j=0,1,2}\theta_j = \frac{\ceil{\frac{2}{\epsilon}}-1}{g(\2ec)\ceil{\frac{2}{\epsilon}}}$. By assumption, we have $\frac{\ceil{\frac{2}{\epsilon}}-1}{g(\2ec)\ceil{\frac{2}{\epsilon}}}\leq  \frac{1}{g(\2ec)}-\epsilon$, which is a contradiction.
	\end{cproof}
	This concludes the proof. \end{proof}
In contrast to FDT for binary IPs where we round up the fractional variables that are already branched on at each level, in FDT for 2EC we keep all coordinates as they are and perform a rounding procedure at the end. Formally, let $L_i$ for $i=1,\ldots,|\spp(x^*)|$ be collections of pairs of feasible points in $\subtour(G)$ together with their multipliers. Let $t=|\spp(x^*)|$ and assume without loss of generality that $\spp(x^*)=\{e_1,\ldots,e_t\}$. 

\begin{lemma}\label{2ecpruning}
	The FDT algorithm for 2EC in  polynomial time produces sets $L_0,\ldots,L_t$ of pairs $x\in \2ec(G)$ together with multipliers $\lambda$ with the following properties for $i\in [t]$:
	(a) If $x\in L_i$, then $x_{e_j}=0$ or $x_{e_j}\geq 1$ for $j=1,\ldots,i$, (b) $\sum_{(x,\lambda)\in L_i }\lambda \geq \frac{1}{g(\2ec)^i}$, (c) $\sum_{(x,\lambda)\in L_i }\lambda x \leq x^*$, (d) $|L_i|\leq t$.
\end{lemma}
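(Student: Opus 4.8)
The plan is to mimic the proof of Lemma \ref{prune} (the binary FDT pruning lemma), using Lemma \ref{LPC2EC} as the branching subroutine in place of Lemma \ref{round-up}, and then re-deriving the size bound with the pruning LP adapted to the $\{0,1,2\}$ setting. I would proceed by induction on $i$. For the base case, let $L_0 = \{(x^*,1)\}$; properties (a)--(d) hold trivially since $x^* \in \subtour(G) \subseteq \2ec(G)$ (noting $\subtour(G) = \dom(\conv(\2ec(G)))$-type containment, or more precisely $x^*$ is dominated by a point in $\2ec(G)$, which is all we need — I should be careful here about whether $L_i$ contains points of $\subtour(G)$ or of $\2ec(G)$, and state it as points of $\subtour(G)$ that are dominated appropriately, consistent with how the text sets up $L_i$).

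For the inductive step, assume $L_i = \{(x^j,\lambda_j) : j \in [k]\}$ with $k \le t$ satisfies (a)--(d). For each $j$, if $x^j_{e_{i+1}} = 0$ or $x^j_{e_{i+1}} \ge 1$ already, I keep $(x^j, \lambda_j)$ and put it in the working set $L'$. Otherwise $0 < x^j_{e_{i+1}} < 1$, and I apply Lemma \ref{LPC2EC} with edge $e = e_{i+1}$ to obtain $x^{j0}, x^{j1}, x^{j2}$ and scalars $\gamma_0^j, \gamma_1^j, \gamma_2^j$ with $\gamma_0^j + \gamma_1^j + \gamma_2^j \ge 1/g(\2ec)$, each $x^{jr} \in \subtour(G)$, $x^{jr}_{e_{i+1}} = r$, $\gamma_0^j x^{j0} + \gamma_1^j x^{j1} + \gamma_2^j x^{j2} \le x^j$, and the crucial clause (v): coordinates that were already $\ge 1$ in $x^j$ stay $\ge 1$ in each $x^{jr}$. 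This last clause is what guarantees property (a) is \emph{preserved} for the earlier coordinates $e_1, \ldots, e_i$ without any explicit rounding step — this is the structural reason the 2EC version postpones rounding to the end rather than rounding up at each level as in the binary case. I add $(x^{j0}, \lambda_j\gamma_0^j)$, $(x^{j1}, \lambda_j\gamma_1^j)$, $(x^{j2}, \lambda_j\gamma_2^j)$ to $L'$. Then (a) holds for $L'$ over coordinates $e_1, \ldots, e_{i+1}$; (c) holds since $\sum_r \lambda_j \gamma_r^j x^{jr} \le \lambda_j x^j$ and summing over $j$ telescopes to $\le x^*$; and (b) holds since $\sum_{(x,\lambda)\in L'} \lambda \ge \sum_j \lambda_j \cdot (1/g(\2ec)) \ge 1/g(\2ec)^{i+1}$.

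The remaining issue is that $|L'|$ can be as large as $3k \le 3t$, so I need the pruning LP. Define $\prun(L')$ exactly as in Lemma \ref{prune}: maximize $\sum_{j} \theta_j$ subject to $\sum_j \theta_j x^j_{e_i} \le x^*_{e_i}$ for $i \in [t]$ (only the $t$ support coordinates of $x^*$ matter, since off-support all $x^j$ are zero by property (c)-type reasoning, or one restricts attention to those coordinates) and $\theta \ge 0$. Setting $\theta$ equal to the current multipliers is feasible, so the optimal value is $\ge 1/g(\2ec)^{i+1}$, preserving (b). Taking $\theta^*$ to be an optimal \emph{vertex} solution of this LP in $\mathbb{R}^{|L'|}$, it satisfies $|L'|$ linearly independent tight constraints, but there are only $t$ packing constraints, so at most $t$ of the $\theta^*_j$ are nonzero; discarding the zero-weight pairs gives $|L_{i+1}| \le t$, i.e. (d). Properties (a) and (c) are inherited (a) trivially and (c) because $\theta^* $ satisfies the packing constraints by construction, so $\sum \theta^*_j x^j \le x^*$ coordinatewise on the support, and equals $0$ off the support. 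This completes the induction.

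I do not anticipate a genuine obstacle, as the argument is a near-verbatim transcription of Lemma \ref{prune} with the three-way branch of Lemma \ref{LPC2EC}; the one place requiring care — and the thing I'd highlight — is property (a): one must check that clause (v) of Lemma \ref{LPC2EC} is exactly strong enough to protect the already-fixed coordinates $e_1, \ldots, e_i$ through the branching on $e_{i+1}$, so that no coordinate that has reached a value $\ge 1$ ever drops back below $1$, and hence no rounding-up step (which in the binary case was Lemma \ref{round-up}) is needed mid-algorithm. A secondary point of care is bookkeeping the constant factor: each level multiplies the surviving total weight by at least $1/g(\2ec)$ and the pruning step never decreases it, giving $1/g(\2ec)^i$ at level $i$ and ultimately the $C \le g(\2ec)^{|E_x|}$ bound claimed in Theorem \ref{FDT2EC} after the final rounding and the application of a 2EC-analogue of Lemma \ref{domlemma} to turn each leaf into an actual 2-edge-connected multi-subgraph.
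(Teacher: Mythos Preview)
Your proposal is correct and follows essentially the same approach as the paper: induction with $L_0=\{(x^*,1)\}$, three-way branching via Lemma~\ref{LPC2EC} at each level, and the same pruning LP to restore $|L_{i+1}|\le t$. One small point to tighten: when verifying property~(a) for the earlier coordinates $e_1,\ldots,e_i$, clause~(v) of Lemma~\ref{LPC2EC} handles only the case $x^j_{e_r}\ge 1$; you also need property~(iv), $\sum_r \gamma_r^j x^{jr}\le x^j$, to conclude that if $x^j_{e_r}=0$ then each $x^{jr}_{e_r}=0$ (the paper's proof invokes both (iv) and (v) explicitly). Also, your closing remark about a ``2EC-analogue of Lemma~\ref{domlemma}'' is not what the paper does afterward: the leaves are handled by simply flooring (Lemma~\ref{rounddown}), not by a DomToIP-style routine.
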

The proof is similar to Lemma \ref{prune}, but we need to use property (v) in Lemma \ref{LPC2EC} to prove that (a) also holds.
\begin{proof}
	We proceed by induction on $i$. Define $L_0=\{(x^*,1)\}$. It is easy to check all the properties are satisfied. Now, suppose by induction we have $L_{i-1}$ for some $i=1,\ldots,t$ that satisfies all the properties. For each solution $x^\ell$ in $L_{i-1}$ apply Lemma \ref{LPC2EC} on $x^\ell$ and $e_{i}$ to obtain $x^{\ell j}$ and $\lambda_{\ell j}$ for $j=0,1,2$. Let $L'$ be the collection that contains $(x^{\ell j},\lambda_\ell \cdot \lambda_{\ell j})$ for $j=0,1,2$, when applied to all $(x^\ell,\lambda_\ell)$ in $L_{i-1}$. Similar to the proof in Lemma \ref{prune} one can check that $L_i$ satisfies properties (b), (c). We now verify property (a). Consider a solution $x^\ell$ in $L_{i-1}$. For $e\in \{e_1,\ldots,e_{i-1}\}$ if $x^\ell_e =0$, then by property (iv) in Lemma \ref{LPC2EC} we have $x^{\ell j}=0$ for $j=0,1,2$. Otherwise by induction we have $x^{\ell}_{e}\geq 1$ in which case property (v) in Lemma \ref{LPC2EC} ensures that $x^{\ell j}_e\geq 1$ for $j=0,1,2$. Also, $x^{\ell j}_{e_i}= j$, so $x^{\ell j}_{e_i}=0$ or $x^{\ell j}_{e_i}\geq 1$ for $j=0,1,2$. 
	
	Finally, if $|L'|\leq t$ we let $L_i=L'$, otherwise apply $\prun(L')$ to obtain $L_{i}$.
\end{proof}

Consider the solutions $x$ in $L_t$. For each variable $e$ we have $x_e=0$ or $x_e\geq 1$. 
\begin{lemma}\label{rounddown}
	Let $x$ be a solution in $L_t$. Then $\floor{x} \in \subtour(G)$. 
\end{lemma}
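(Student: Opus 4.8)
The plan is to verify directly that $\floor{x}$ satisfies the two families of constraints defining $\subtour(G)$: the box constraints $\floor{x}\in[0,2]^E$ and the cut constraints $\floor{x}(\delta(U))\ge 2$ for every $\emptyset\subsetneq U\subsetneq V$. Note first that $x$ itself lies in $\subtour(G)$: it is obtained from $x^*\in\subtour(G)$ by repeated applications of the branching of Lemma~\ref{LPC2EC} (whose outputs satisfy clause (ii)) followed by $\prun$, which only rescales multipliers and leaves the points unchanged. Hence $x\in[0,2]^E$ and $x(\delta(U))\ge 2$ for every such $U$, and since $\floor{x}\in\{0,1,2\}^E$ the box constraints for $\floor{x}$ hold automatically.

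The first real step toward the cut constraints is to extract the structural property of a leaf solution: \emph{every coordinate of $x$ is either $0$ or at least $1$.} For an edge $e\in\spp(x^*)=\{e_1,\dots,e_t\}$ this is precisely property (a) of Lemma~\ref{2ecpruning} at level $i=t$. For an edge $e\notin\spp(x^*)$ one has $x_e=0$: this follows from property (c) of Lemma~\ref{2ecpruning} together with $x^*_e=0$ (the surviving nodes carry positive multipliers), or equivalently by propagating clause (iv) of Lemma~\ref{LPC2EC} down the tree from the root, where $x^*_e=0$. So for each edge $e$ either $x_e=0$ and $\floor{x}_e=0$, or $x_e\in[1,2]$ and $\floor{x}_e\in\{1,2\}$, in particular $\floor{x}_e\ge 1$.

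Now fix a cut $\delta(U)$ and let $A=\{e\in\delta(U):x_e\ge 1\}$ be its set of heavy edges. By the structural property the remaining edges of $\delta(U)$ contribute nothing, so $x(\delta(U))=\sum_{e\in A}x_e$; since $x(\delta(U))\ge 2>0$ we have $A\neq\emptyset$. I would then split into two cases. If $|A|\ge 2$, pick two distinct edges $e,f\in A$; then $\floor{x}(\delta(U))\ge\floor{x}_e+\floor{x}_f\ge 1+1=2$. If $|A|=1$, say $A=\{e\}$, then $2\le x(\delta(U))=x_e\le 2$, so $x_e=2$ and $\floor{x}_e=2$, giving $\floor{x}(\delta(U))\ge 2$ again. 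Since $U$ was arbitrary, $\floor{x}$ meets every cut constraint, so $\floor{x}\in\subtour(G)$.

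The one point that needs care — and the reason a one-line argument does not suffice — is that the edgewise estimate $\floor{x}_e\ge x_e/2$ (valid on $\{0\}\cup[1,2]$) only yields $\floor{x}(\delta(U))\ge 1$, which even after rounding to an integer is still short of $2$. The case analysis is exactly what closes this gap, exploiting that a cut of $x$-value at least $2$ whose edges each have value $0$ or $\ge 1$ must contain either two edges of value $\ge 1$ or a single edge of value exactly $2$. For this to go through it is essential that the ``$0$ or $\ge 1$'' property holds for \emph{every} edge of $\delta(U)$, which is why the second step establishes it for all edges of $G$ rather than only for those in $\spp(x^*)$.
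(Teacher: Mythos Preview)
Your proof is correct and follows essentially the same approach as the paper's: establish that every coordinate of $x$ is either $0$ or at least $1$, then argue cut by cut that at least two units survive the floor. The paper phrases this by contradiction (if some cut drops below $2$ after flooring, locate a fractional edge $f$ with $1<x_f<2$, then a second positive edge $h$ with $x_h\ge 1$, contradiction), whereas you give the direct case split on $|A|$; you are also a bit more explicit than the paper in handling edges outside $\spp(x^*)$, but the content is the same.
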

\begin{proof}
	Suppose not. Then there is a set of vertices $\emptyset \subset U \subset V$ such that $\sum_{e\in \delta(U)}\floor{x_e}<2$. Since $x\in \subtour(G)$ we have $\sum_{e\in \delta(U)}x_e \geq 2$. Therefore, there is an edge $f\in \delta(U)$ such that $x_f$ is fractional. By property (a) in Lemma \ref{2ecpruning}, we have $1<  x_f < 2$. Therefore, there is another edge $h$ in $\delta(U)$ such that $x_h>0$, which implies that $x_h\geq 1$. But in this case $\sum_{e\in \delta(U)}\floor{x_e}\geq  \floor{x_f}+\floor{x_h}  \geq 2$. This is a contradiction.
\end{proof}

The FDT algorithm for 2EC iteratively applies Lemmas \ref{LPC2EC} and \ref{2ecpruning} to variables $x_1,\ldots,x_t$ to obtain leaf point solutions $L_t$. Finally, we just need to apply Lemma \ref{rounddown} to obtain the 2-edge-connected multi-subgraphs from every solution in $L_t$. Since $x$ is an extreme point we have $t\leq 2|V|-1$ \cite{boydpulley}. By Lemma \ref{2ecpruning} we have
\begin{align*}
\sum_{(x,\lambda)\in L_t} \frac{\lambda}{\sum_{(x,\lambda)\in L_t}\lambda} \floor{x} \leq \frac{1}{\sum_{(x,\lambda)\in L_t}\lambda} \sum_{(x,\lambda)\in L_t} \lambda {x} \leq g^t_{\2ec} x^*.
\end{align*}
\section{Computational Experiments with FDT}\label{sec:experiment}
We ran FDT on three network design problems: VC, TAP and 2EC. 

We implemented the experiments for VC and TAP in Python running on a linux workstation (Ubuntu 18.04.3) with 8 cores of Intel(R) Core(TM) i7-8565U CPU  1.80GHz processors and 1Mb of cache. We used the CPLEX 12.9.0.0 solver to solve the pyomo LP models. We ran the experiments for 2EC on a Windows machine, coded in AMPL with CPLEX as the solver.
\paragraph{FDT on VC instances from (PACE 2019) \cite{PACE}.}

We compared Dive FDT (Algorithm \ref{FDT-dive}) with feasbility pump \cite{fp1} in terms of running time
and the quality of solution provided by each algorithm. We used the small (200 vertex) test cases from the PACE 2019 vertex-cover challenge. The results are presented in Figure \ref{fpvsfdt}. 

\begin{figure}[h!]
\begin{subfigure}{.5\textwidth}
\centering
	\includegraphics[width=8cm,scale=1]{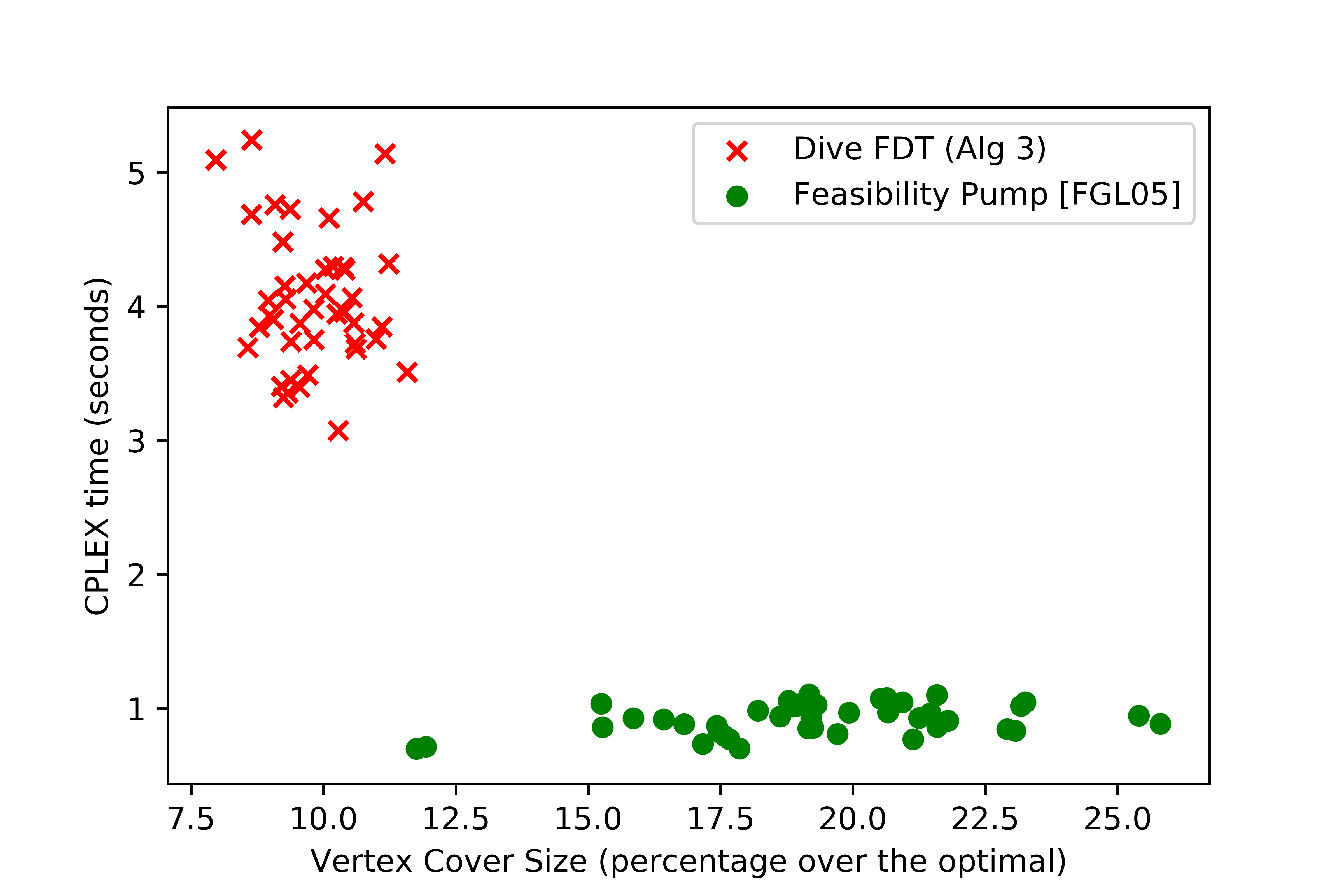}
	\caption{Dive FDT vs feasbility pump on the instances of PACE 2019 \cite{PACE} with 200 vertices.}
	\label{fpvsfdt}
	\end{subfigure}
	$\quad\;$
	\begin{subfigure}{.5\textwidth}
	\centering
	\includegraphics[width=8cm,scale=1]{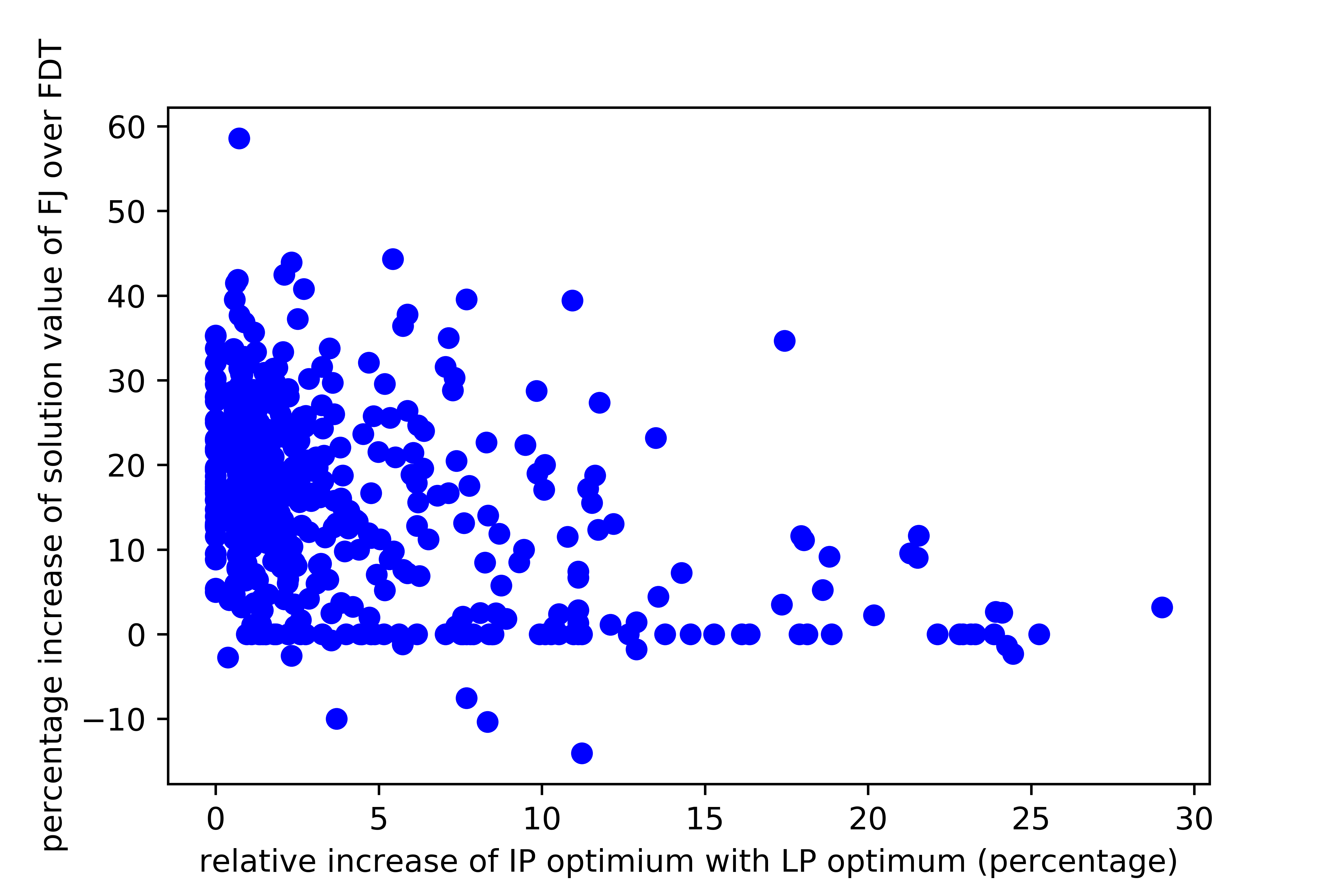}
	\caption{TAP on our random instances: FDT run on LP optimal vs the 2-approximation for TAP\cite{FJ81}.}
	\label{fjvsfdt}
	\end{subfigure}
	\caption{Computational experiments with the FDT algorithm}
	\label{fdtcomp}
\end{figure}
\paragraph{FDT on randomly generated instances of TAP.}
Recall that in the tree augmentation problem (TAP) we are given a tree $T=(V,E)$, a set of non-tree links $L$ between vertices in $V$ and costs $c\in \mathbb{R}^{L}_{\geq 0}$. A feasible augmentation is $L'\subseteq L$ such that $T+L'$ is 2-edge-connected. In TAP we wish to find the minimum-cost feasible augmentation. The integrality gap of the cut LP for TAP (given in (\ref{eq:cutLP})) is
\begin{equation*}
g(\tap) = \max_{c\in \mathbb{R}^L_{\geq 0}} \frac{\min_{x\in\tap(T,L)} cx}{\min_{x\in\cut(T,L)} cx},
\end{equation*}  
where $\tap(T,L)$ is the feasible set for the IP and $\cut(T,L)$ is the feasible set for the cut LP (relaxation).
We know $\frac{3}{2}\leq g(\tap)\leq 2$~\cite{FJ81,32gaptap}. The IP $\min_{x\in \tap(T,L)}cx$ is binary. 

As input for our experiments, we considered full binary trees with 3 to 7 levels with a link for each pairs of leaves. We set the link costs uniformly at random.  We summarize these test instances in Table~\ref{tableTAP}. We ran binary FDT on each test instance and chose the solution with minimum cost. We compare the FDT solutions to those from the circulation-based 2-approximation algorithm of Frederickson and J\'{a}J\'{a}~\cite{FJ81} in Figure \ref{fjvsfdt}.  The simple FJ heuristic was faster than FDT, from about 8x to about 60x on our examples. However, FDT still ran in less than a minute on the bigger problems and less than 10 seconds on the others. FDT (in the worst case) solves $\Omega(m^2)$ LPs, where $m$ is the number of edges, and FJ solves one LP. So the difference in running time, which depends on instance size, is not surprising. FDT runs in polynomial time and usually gives much better solutions than FJ. 

\begin{table}[h!]
	\begin{small}
		\centering
		\begin{tabular}{c c c c}
			\hline
			& number of edges in $T$ & number of links in $L$ & number of instances $(T,L)$\\ \hline
			 & $6$ & $6$ & $100$ \\ 
			 & $14$ & $28$ & $100$ \\ 
			 & $30$ & $120$ & $100$ \\
			 & $62$ & $496$ & $100$ \\ 
			  & $126$ & $2016$ & $100$ \\  \hline \\
		\end{tabular}\caption{Summary of the randomly generated instances of TAP.} \label{tableTAP}
	\end{small}
\end{table}
For all 500 instances in our experiments, running FDT on the LP optimal (fractional extreme point) of the cut LP gave a feasible solution with value at most a factor $\frac{3}{2}$ larger than the LP lower bound.  Such a feasible solution gives an upper bound on the integrality gap $g(\tap)$ of that specific instance of at most $\frac{3}{2}$.
In fact, the integrality gap upper bound derived this way was equal to $\frac{3}{2}$ for only one instance.
For 480 instances, the integrality gap upper bound was $\frac{4}{3}$, for 16 instances it was $\frac{6}{5}$, for 2 instances it was $\frac{8}{7}$, for 1 instance it was $\frac{10}{9}$. 
\paragraph{Computational comparison between Christofides' algorithm and FDT for 2EC on Carr-Vempala points.} 

As described in Section~\ref{sec:2EC-intro}, when the LP optimum for (\ref{eq:subtour}) is a Carr-Vempala point (see Figure~\ref{fig:CVpoint}), the integrality gap of that instance is equal to that of 2EC. So in that sense, these are the hardest fractional points to round to a feasible solution. We now compare the quality of FDT's solution from a Carr-Vempala point to that of Christofides, the best known approximation algorithm.  In the classic graph-based Christofides algorithm for TSP, find a minimum spanning tree of the graph.  Then, since TSP runs on a complete graph, add a perfect matching on the vertices that have odd degree in the spanning tree to make the (multi)-subgraph 2-edge-connected. If we stop that algorithm without shortcutting, we have a feasible solution for Metric-2EC (given a complete graph, find the minimum-cost 2-edge-connected subgraph, with no multiedges).
\arash{The following sentence is not accurate, Metric-TSP is the same as TSP on a non-complete graph where we can take multiple copies of the edges, Metric-2EC (given complete graph with metric costs find the minimum cost 2-edge-connected SUBgraph -no doubling of edges-) is the same as 2EC on a non-complete graph where we can take multiple copies of edges.}
\cindy{It sounded like 2EC doesn't assume a complete graph.  But I removed the sentence.  Is the following OK?}
\arash{Still not quite accurate. The sentence gives the impression that changing matching with $T$-join is the difference between TSP and 2EC, but its the difference between working with complete graph with metric costs and a non-complete graph where doubling edges is allowed.}
For 2EC, instead of a matching, we add an $O$-join, where $O$ is the set of odd-degree vertices in the spanning tree~\footnote{For graph $G=(V,E)$ and $O\subseteq V$ with $|O|$ even, an $O$-join of $G$ is a subgraph of $G$ that has odd degree on the vertices in $O$ and even degree on vertices in $V\setminus O$. Graphs always have an even number of odd vertices since every edge has two endpoints, giving an even number of endpoints.}. The best-of-many Christofides (BOMC) algorithm~\cite{AKS15} works better than the classic Christofides algorithm in practice~\cite{GW17}.  The BOMC algorithm samples many spanning trees from the LP relaxation (with edges chosen proportionally to the LP value), augments each spanning tree with the matching, and takes the best solution.


The polyhedral analysis of Christofides shows that if $x\in \subtour(G)$, then $\frac{3}{2}x$ dominates a convex combination of connected Eulerian (hence 2-edge-connected) multi-subgraphs of $G$~\cite{wolsey,shmoyswilliamson}. This bound holds for any point $x\in \subtour(G)$ for any graph $G$, including any Carr-Vempala point. We show how this bound can be improved for specific instances by doing a slightly improved analysis of the algorithm.

Let $x$ be a Carr-Vempala point (from~(\ref{eq:subtour})) defined on a graph $G=(V,E)$. 
If $x\in \subtour(G)$, then $\frac{|V|-1}{|V|}x$ can be written as a convex combination of spanning trees of $G$ (See Proposition 2 in~\cite{Vygen12} for instance).  More explicitly,
\begin{equation}
\label{eq:tree-decomp}
\frac{|V|-1}{|V|}\cdot x= \sum_{i=1}^{k}\lambda_i\chi^{T_i},
\end{equation}
where $T_i$ is spanning tree of $G$, $\chi^{T_i}$ is its incidence vector, $\sum_{i=1}^{k}\lambda_i=1$, and $\lambda_i\geq 0$ for $i\in [k]$. The spanning-tree polytope is integral so there are polynomial-time algorithms to construct this decomposition (See Proposition~\ref{cara} and surrounding discussion).

Let $O_i$ be the set of odd-degree vertices of spanning tree $T_i$.
Adding an $O_i$-join to tree $T_i$ gives a solution to 2EC.
We solve the following LP that allows us to find edges that simultaneously augment all the spanning trees to solutions to 2EC.
\begin{equation}\label{ojoinaverage}
\min \{ \alpha:\;\sum_{i=1}^{k} \lambda_i y^i = \alpha \cdot x,\;  y^i \in \dom(O_i\join(G)) \; \mbox{for $i\in [k]$}\}.
\end{equation}
The variables in the above LP are $y^i\in \mathbb{R}^{E}_{\geq 0}$ for $i\in [k]$, where $E$ is the number of edges in the graph.  The parameters $\lambda_i$ are those we computed in~(\ref{eq:tree-decomp}). For each $i\in [k]$, we require $y^i$ to be in the dominant of $O_i\join(G)$.  We enforce this by adding a set of constraints for each $i$ from~\cite[p. 490]{schrijverbook}. These constraints force the inclusion of at least one edge from the cut associated with each set of vertices that have an even number of elements from $O_i$. The  $y^i$ need not be integer vectors. Because the $O_i$-join polytope is integral, we can replace $y^i$ with a convex combination of $O_i$-joins of $G$ (proposition~\ref{cara}). Setting $y^i = \frac{x}{2}$ and $\alpha = \frac{1}{2}$ gives a feasible solution to this LP, yielding an integrality gap of at most $\frac{3}{2}$ for the subtour-elimination relaxation for the TSP \cite{wolsey}. More generally, this method gives a ($\frac{|V|-1}{|V|}+\alpha$)-approximation for the specific instance $x$. 

\cindy{Arash, I liked your explanation about how a single edge will need to contribute 3/2 in the classic method.  You can add that if you'd like, but that might not be necessary given the time.}

Figure \ref{fdtvschris} shows FDT's solutions on all Carr-Vempala points that have 10 vertices on the cycle formed by fractional edges. We show for these points the apporoximation factor provided by FDT is always better than those from the polyhedral version of Christofides' algorithm. In Figure~\ref{fdtvschris} the horizontal axis of the plot is indexed with the 60 Carr-Vempala points that we considered. For each Carr-Vempala point $x$, there are two data points. The value of the first data point depicted by a circle on the vertical axis is $\frac{|V|-1}{|V|}+\alpha$  where $\alpha$ is the optimal solution to (\ref{ojoinaverage}).
The value of the second data point depicted by a cross on the vertical axis is $C$ where $C$ is obtained from applying Theorem \ref{FDT2EC} to $x$. In other words, Figure \ref{fdtvschris} is comparing the instance-specific upper bound on integrality gap certified by Christofides' algorithm to the approximation factor of the FDT algorithm for 2EC.

\begin{figure}
\centering
\begin{tikzpicture}[scale=0.4]

\draw [dashed] [black, line width=0.2mm] plot [smooth, tension=0] coordinates {(0,4) (0,-4)};
\draw [dashed] [black, line width=0.2mm] plot [smooth, tension=0] coordinates {(2.83,2.83) (-4,0)};
\draw [dashed] [black, line width=0.2mm] plot [smooth, tension=0] coordinates {(4,0) (-2.83,-2.83)};
\draw [dashed] [black, line width=0.2mm] plot [smooth, tension=0] coordinates {(-2.83,2.83) (2.83,-2.83)};

\draw [-] [black, line width=0.2mm] plot [smooth, tension=0] coordinates {(0,4) (2.83,2.83)};

\draw [-] [black, line width=0.2mm] plot [smooth, tension=0] coordinates {(4,0) (2.83,2.83)};

\draw [-] [black, line width=0.2mm] plot [smooth, tension=0] coordinates {(4,0) (2.83,-2.83)};

\draw [-] [black, line width=0.2mm] plot [smooth, tension=0] coordinates {(0,-4) (2.83,-2.83)};

\draw [-] [black, line width=0.2mm] plot [smooth, tension=0] coordinates {(0,-4) (-2.83,-2.83)};

\draw [-] [black, line width=0.2mm] plot [smooth, tension=0] coordinates {(-4,0) (-2.83,-2.83)};

\draw [-] [black, line width=0.2mm] plot [smooth, tension=0] coordinates {(-4,0) (-2.83,2.83)};

\draw [-] [black, line width=0.2mm] plot [smooth, tension=0] coordinates {(0,4) (-2.83,2.83)};

\draw[black,fill=white] (0,4) ellipse (0.5 cm  and 0.5 cm);
\draw[black,fill=white] (4,0) ellipse (0.5 cm  and 0.5 cm);
\draw[black,fill=white] (0,-4) ellipse (0.5 cm  and 0.5 cm);
\draw[black,fill=white] (-4,0) ellipse (0.5 cm  and 0.5 cm);
\draw[black,fill=white] (2.83,2.83) ellipse (0.5 cm  and 0.5 cm);
\draw[black,fill=white] (-2.83,-2.83) ellipse (0.5 cm  and 0.5 cm);
\draw[black,fill=white] (2.83,-2.83) ellipse (0.5 cm  and 0.5 cm);
\draw[black,fill=white] (-2.83,2.83) ellipse (0.5 cm  and 0.5 cm);

\node (1) at (0,4) {{1}};
\node (2) at (2.83,2.83) {{2}};
\node (3) at (4,0) {{3}};
\node (4) at (2.83,-2.83) {{4}};
\node (5) at (0,-4) {{5}};
\node (6) at (-2.83,-2.83) {{6}};
\node (7) at (-4,0) {{7}};
\node (8) at (-2.83,2.83) {{8}};
\end{tikzpicture}
\caption{A Carr-Vempala point with 8 vertices on its cycle. Solid lines are edges with value strictly between $0$ and $1$. Dashed edges represent paths where each edge on the path has $x_e =1$. There can be an arbitrary number of degree-$2$ vertices on the dashed paths.}
\label{fig:CVpoint}
\end{figure}
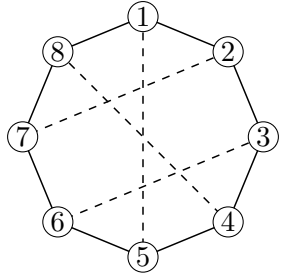

\begin{figure}[h!]
	\centering
	\includegraphics[width=9cm,scale=1.4]{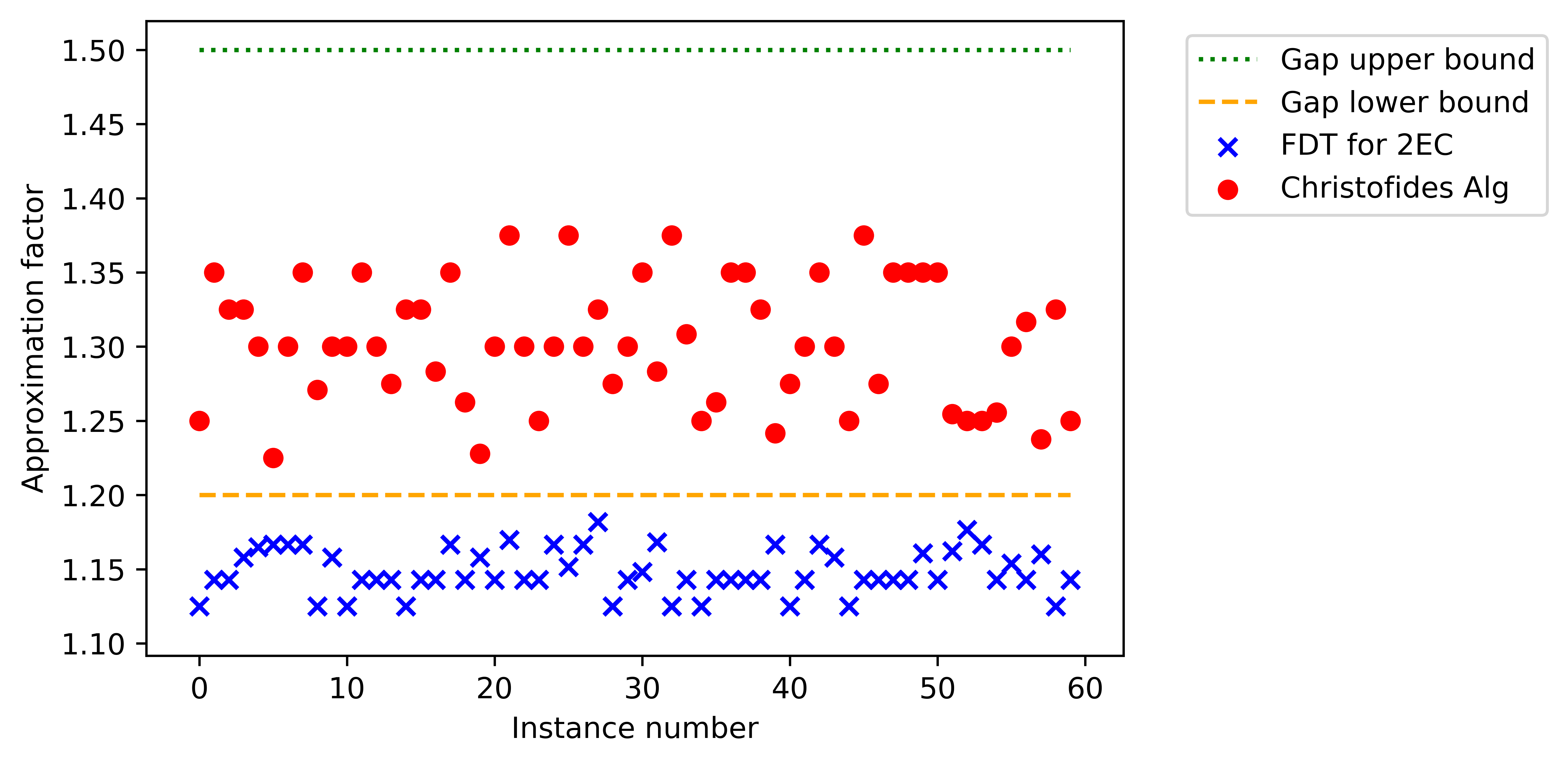}
	\caption{Polyhedral version of Christofides' algorithm vs FDT on all Carr-Vempala points that have 10 vertices on the single cycle formed by fractional edges.}
	\label{fdtvschris}
\end{figure}
\paragraph{FDT for 2EC on Carr-Vempala points.}
We ran FDT for 2EC on 963 fractional extreme points of $\subtour(G)$. We enumerated all (fractional) Carr-Vempala points with $10$ and $12$ vertices. Table \ref{table2EC} shows that again FDT found solutions better than the integrality-gap lower bound for most instances. 
\begin{table}[h!]
	\begin{small}
		\centering
		\begin{tabular}{c c c c c}
			\hline
			& $C\in [1.08,1.11]\;$ & $\;C\in (1.11,1.14]\;$ &
			$\;C\in (1.14,1.17]$ &\; $C\in (1.17,1.2]\;$ \\ \hline
			2EC & $79$ & $201$ & $605$ & $78$ \\ \hline\\
		\end{tabular}	\caption{FDT for $\2ec$ implemented applied to all Carr-Vempala with 10 or 12 vertices. A Carr-Vempala point with $k$ vertices has $\frac{3k}{2}$ edges. Thus, the upper bound provided by Theorem \ref{FDT2EC} is $g(\2ec)^{3k/2}$. The lower bound on $g(\2ec)$ is $\frac{6}{5}$.}
		\label{table2EC}
	\end{small}
\end{table}

\section{Concluding Remarks}

The results in Sections \ref{sec:domTOIP}, \ref{sec:binaryfdt}  and \ref{sec:2EC} hold if the intial integer program has some auxilliary continuous variables\footnote{Here by auxilliary variables we mean a variable that does not participate in the objective function}. That is, when we have 
\begin{equation*}
S(A,b) = \{x\in \mathbb{Z}^n\times \mathbb{R}^p: Ax\geq b\},
\end{equation*}
where the last $p$ variables are continous, $P(A,b) = \{x\in \mathbb{R}^{n+p}: Ax\geq b\}$
and define
\begin{equation*}
g(I) = \max_{c\in \mathbb{R}^n_+}\frac{\min_{x\in S(A,b)} \sum_{i=1}^{n}c_ix_i}{\min_{x\in P(A,b)} \sum_{i=1}^{n}c_ix_i},
\end{equation*}
our main results work. In fact our implementation of the subtour-elimination relaxation is based on an extended formulation with auxiliary variables (see \cite{subtour-extended}). We removed this extension to make the presentation simpler.

Our experiments in Section~\ref{sec:experiment} give a proof of concept. They show that its plausible FDT will have practical benefit as an IP heuristic for problems with appropriate structure. FDT performance will likely improve in a future more high-performance version coded in C or C++, perhaps able to take advantage of the low-level parallelism available on all modern platforms (even laptops).  This will allow tests on larger instances. We leave as future work a more comprehensive set of experiments to determine on which kinds of problems FDT is likely to outperform other general heuristics. Any consistent structure of such instances could lead to proofs of better approximation bounds.

Fractional Decomposition Tree is a tool to experimentally evaluate the known bounds (and conjectured bounds) on the integrality gap of combinatorial optimization IP formulations. We hope that applying this tool to a wide range of problems can be a tool to guide conjectures on the upper bounds of integrality gap or at least narrow down the instances for which a closer look is required for the study of integrality gap.

\cindy{
Here's my original comment on a conclusion section with updated list.
Most papers have a ``discussion and conclusions'' section. It doesn't have to be long and it doesn't have to repeat the paper contributions said before. It can include extensions or thoughts/discussions that don't easily fit into the main body. It can include new insights into the value of work now that the reader understands it better. Here are some thoughts about what could go there:
1)DONE: Bob was quite insistent at one point that we describe how to extend to continuous variables as long as they are not in the objective function. Can we state that in a conclusion and very briefly summarize what would change for this case?
2) Open questions/future research.  Any thoughts on open theory questions?
3) DONE: Experimental questions would include doing a more high-performance version (e.g. in C or C++) to run on larger instances, and a more comprehensive set of experiments to determine on what kind of problems FDT is likely to outperform other general heuristics.
4) TODO: more after I have gone all the way through.
}

\bibliographystyle{alpha}
\bibliography{FDT}

\end{document}